\def\iddots{\mathinner{\mkern1mu\raise\p@
\vbox{\kern7\p@\hbox{.}}\mkern2mu
\raise4\p@\hbox{.}\mkern2mu\raise7\p@\hbox{.}\mkern1mu}}
\newcommand{\NN}{\mathbb{N}}
\newcommand{\RR}{\mathbb{R}}
\newcommand{\CC}{\mathbb{C}}
\DeclareMathOperator{\supp}{supp}
\DeclareMathOperator{\Tr}{Tr}
\DeclareMathOperator{\conv}{{\bf conv}}
\DeclareMathOperator{\linspan}{span}
\DeclareMathOperator{\cl}{{\bf cl}}
\renewcommand{\tilde}[1]{\widetilde{#1}}
\theoremstyle{plain}
\newtheorem{proposition}{Proposition}
\newtheorem{theorem}{Theorem}
\theoremstyle{definition}
\theoremstyle{remark}
\newtheorem{remark}{Remark}
\newcommand{\Sep}{\text{Sep}}
\newcommand{\Sepcone}{\mathcal{SEP}}
\newcommand{\PPT}{\text{PPT}}
\newcommand{\PPTcone}{\mathcal{PPT}}
\newcommand{\DPS}{\text{DPS}}
\newcommand{\C}{\mathcal{C}}
\renewcommand{\H}{\mathbf{H}}
\newcommand{\M}{\text{M}} % matrices
\newcommand{\m}{\mathbf{m}} % monomial map, Veronese embedding
\newcommand{\psd}{\geq}%{\succeq}
\renewcommand{\Re}{\text{Re}}
\renewcommand{\Im}{\text{Im}}
\newcommand{\T}{\mathsf{T}}
\newcommand{\W}{\mathbf{W}}
\newcommand{\pp}{\tilde{p}}
\newenvironment{sm}{\left[\begin{smallmatrix}}{\end{smallmatrix}\right]}
\title{The set of separable states has no finite semidefinite representation except in dimension $3\times 2$}
\author{Hamza Fawzi\thanks{Department of Applied Mathematics and Theoretical Physics, University of Cambridge. Email: \texttt{h.fawzi@damtp.cam.ac.uk}.}}
\date{May 4, 2019}
\begin{document}

\maketitle

\begin{abstract}
Given integers $n \geq m$, let $\Sep(n,m)$ be the set of separable states on the Hilbert space $\CC^n \otimes \CC^m$. It is well-known that for $(n,m)=(3,2)$ the set of separable states has a simple description using semidefinite programming: it is given by the set of states that have a positive partial transpose. In this paper we show that for larger values of $n$ and $m$ the set $\Sep(n,m)$ has no semidefinite programming description of finite size. As $\Sep(n,m)$ is a semialgebraic set this provides a new counterexample to the Helton-Nie conjecture, which was recently disproved by Scheiderer in a breakthrough result. Compared to Scheiderer's approach, our proof is elementary and relies only on basic results about semialgebraic sets and functions.
\end{abstract}

%\tableofcontents

\section{Introduction}

Entanglement is a fundamental aspect of quantum mechanics. The set of \emph{separable states} (i.e., nonentangled states) on the Hilbert space $\CC^{n} \otimes \CC^{m}$ is defined as:
\[
\Sep(n,m) = \conv \left\{ x x^{\dagger} \otimes yy^{\dagger} : x \in \CC^n, |x| = 1, y \in \CC^m, |y| = 1\right\}.
\]
Here $x^{\dagger} = \bar{x}^{\T}$ indicates conjugate transpose, $|x|^2 = x^{\dagger} x = \sum_{i=1}^n |x_i|^2$ and $\conv$ denotes the convex hull. The set $\Sep(n,m)$ lives in the space $\H^{nm}$ of Hermitian matrices of size $nm \times nm$, and it is full-dimensional in the subspace of matrices of trace equal to one.

A fundamental computational task in quantum information is to decide membership in the convex set $\Sep(n,m)$. One of the first tests designed to check whether a state $\rho \in \H^{nm}$ is separable is the \emph{Peres-Horodecki criterion} \cite{peres1996separability,horodeckiPPT} (also known as the \emph{Positive Partial Transpose (PPT)} criterion). It is based on the observation that for any $\rho \in \Sep(n,m)$, $(I\otimes \T)(\rho)$ is positive semidefinite where $I$ is the identity map, and $\T$ the transpose map. Indeed one can easily verify that if $\rho = xx^{\dagger} \otimes yy^{\dagger}$ then $(I \otimes \T)(\rho) = xx^{\dagger} \otimes (yy^{\dagger})^{\T} = xx^{\dagger} \otimes \bar{y} \bar{y}^{\dagger} \psd 0$.  In other words we have the inclusion $\Sep(n,m) \subseteq \PPT(n,m)$  where
\begin{equation}
\label{eq:PPTnm}
\PPT(n,m) = \Bigl\{ \rho \in \H^{nm} : \rho \psd 0, \;\; (I \otimes \T)(\rho) \psd 0, \text{ and } \Tr[\rho]=1\Bigr\}.
\end{equation}
It is known, from earlier work of Woronowicz \cite{woronowicz1976positive}, that we have equality $\Sep(n,m) = \PPT(n,m)$ if, and only if $n+m\leq 5$. Thus, the smallest cases where $\Sep(n,m) \neq \PPT(n,m)$ are $(n,m) = (4,2)$ and $(n,m) = (3,3)$. 

\paragraph{Semidefinite programming} The description of the set $\PPT(n,m)$ in Equation \eqref{eq:PPTnm} allows us to decide membership, and optimize linear functions on $\PPT(n,m)$, via \emph{semidefinite programming}. Semidefinite programming is a fundamental tool in optimization that has played a crucial role in recent developments in quantum information theory. We say that a convex set $C$ has a \emph{semidefinite representation (also called a semidefinite lift) of size $r$} if it can be expressed as 
\begin{equation}
\label{eq:spectshadow}
 C = \pi(S)
\end{equation}
where $\pi:\RR^D\rightarrow \RR^d$ is a linear map and $S \subset \RR^D$ is a convex set defined using a linear matrix inequality
\begin{equation}
\label{eq:spect}
S = \{w \in \RR^D : M_0 + w_1 M_1 + \dots + w_D M_D \psd 0\}
\end{equation}
where $M_0,\ldots,M_D$ are Hermitian matrices of size $r\times r$. A set $S$ of the form \eqref{eq:spect} is known as a \emph{spectrahedron}. In this paper we are most interested in when a semidefinite representation of finite size exists, and call this simply a semidefinite representation throughout. If a convex set $C$ admits a semidefinite representation, then optimizing a linear function on $C$ can be cast as a semidefinite program. Equation \eqref{eq:PPTnm} gives a semidefinite representation of $\PPT(n,m)$.

\paragraph{Horodecki's criterion} The set of separable states has the following well-known description due to the Horodeckis \cite{horodeckiPPT}: %(or \cite[Theorem 2.34]{ABMB})
\begin{equation}
\label{eq:Sepnmhorodecki}
\Sep(n,m) = %\bigcap_{\substack{\Phi : \M_m\rightarrow \M_n\\\text{positive}}} 
\left\{ \rho \in \H^{nm} : \Tr[\rho] = 1 \text{ and } (I \otimes \Phi)(\rho) \psd 0 \;\; \forall \Phi: \M_m \rightarrow \M_n \text{ positive}\right\}.
\end{equation}
Here $\M_k = \CC^{k\times k}$ and a $\CC$-linear map $\Phi : \M_m\rightarrow \M_n$ is \emph{positive} if it is Hermitian preserving and if $\Phi(X) \psd 0$ for all $X \psd 0$. When $n=m$, the relaxation $\PPT(n,m)$ corresponds to having only the identity and transpose maps in Equation \eqref{eq:Sepnmhorodecki}, which are both positive. A recent result of Skowronek \cite{skowronek} shows that when $n=m=3$, there is no finite family of positive maps $\Phi_1,\ldots,\Phi_k:\M_3 \rightarrow \M_3$ such that $\Sep(3,3) = \left\{ \rho \in \H^{9} : (I \otimes \Phi_i) (\rho) \psd 0 \; \forall i=1,\ldots,k\right\}$. Note that the right-hand side of the previous equation is a specific semidefinite representable set. Thus Skowronek's result rules out certain specific semidefinite representations for $\Sep(3,3)$.\footnote{The result of Skrownek is in fact more general than this, and rules any formulation of the form $\left\{ \rho : (I \otimes \Phi_i)((I\otimes B)\rho(I\otimes B^{\dagger})) \geq 0 \; \forall i=1,\ldots,k, \; \forall B \in \M_3 \right\}$ where $\Phi_1,\ldots,\Phi_k$ is a finite set of positive maps.}

\paragraph{DPS hierarchy} In \cite{doherty2004complete}, Doherty, Parrilo and Spedalieri proposed a complete hierarchy of approximations to the set of separable states based on semidefinite programming. The first level of the hierarchy coincides with the PPT test, and subsequent levels form tighter and tighter convex relaxations of the set of separable states. If we denote the convex relaxation at level $k$ by $\DPS_k(n,m)$ we have (dropping the $(n,m)$):
\[
\Sep \subseteq \dots \subseteq \DPS_k \subseteq \DPS_{k-1} \subseteq \dots \subseteq \DPS_1 = \PPT.
\]
The key property of the DPS hierarchy is that each set $\DPS_k$ has a semidefinite representation of size $\min(n,m)^{O(k)}$.  The hierarchy is known to be complete, meaning that if $\rho \notin \Sep$, then there exists a finite $k$ such that $\rho \notin \DPS_k$. The integer $k$ however depends on the state $\rho$ and it is known that, unless $n+m\leq 5$, there is no finite $k$ such that $\DPS_k(n,m) = \Sep(n,m)$ \cite[Section VIII.B]{doherty2004complete}.

\paragraph{Contributions} 
The main result of this paper is
\begin{theorem}
\label{thm:main}
If $\Sep(n,m) \neq \PPT(n,m)$ then $\Sep(n,m)$ has no (finite) semidefinite representation. In other words, $\Sep(n,m)$ has no semidefinite representation when $n+m > 5$.
\end{theorem}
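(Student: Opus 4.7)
The plan is to argue by contradiction, combining a reduction to minimal cases with a semialgebraic-function argument applied on the dual side. Assume some $\Sep(n,m)$ with $n+m > 5$ admits a finite semidefinite representation.

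First, I would reduce to the two minimal cases $(n,m) \in \{(3,3),(4,2)\}$. For any $n' \leq n$ and $m' \leq m$, the natural block embedding $\CC^{n'} \otimes \CC^{m'} \hookrightarrow \CC^{n} \otimes \CC^{m}$ identifies $\Sep(n',m')$ with the affine slice of $\Sep(n,m)$ consisting of states whose range lies in the embedded subspace $V_1 \otimes V_2$. The nontrivial direction of this identification uses that if $\rho = \sum_i p_i x_i x_i^{\dagger} \otimes y_i y_i^{\dagger}$ is supported on $V_1 \otimes V_2$, then by positivity each pure component $x_i \otimes y_i$ lies in $V_1 \otimes V_2$, which (since it is a rank-one tensor) forces $x_i \in V_1$ and $y_i \in V_2$. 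Because intersection with an affine subspace never increases SDP representation size, it suffices to rule out a finite SDP lift in the two base cases.

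Second, I would pass to the polar. If $\Sep(n,m)$ is SDP representable then so is its dual cone $\Sep(n,m)^{*}$ of \emph{entanglement witnesses}, i.e. Hermitian $W \in \H^{nm}$ with $\langle x \otimes y, W(x \otimes y)\rangle \geq 0$ for all $x \in \CC^{n}$ and $y \in \CC^{m}$; via Choi--Jamio\l{}kowski this is the cone of positive linear maps $\M_m \to \M_n$. The key elementary consequence of SDP representability that I would exploit is that for any semialgebraically parametrized family of Hermitian matrices $W(t)$, the set $\{t : W(t) \in \Sep^{*}\}$ is semialgebraic, and the associated Minkowski functionals, separability thresholds and other natural scalar functions derived from SDP feasibility are semialgebraic in the parameters (this is just Tarski--Seidenberg applied to an SDP lift).

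Third, I would manufacture an explicit family whose threshold is provably not semialgebraic. The plan is to start from a concrete PPT entangled example in the minimal case (a Choi-type construction for $(3,3)$ or a UPB-based construction for $(4,2)$) and embed it in a multiparameter semialgebraic family $\rho(t,s)$ whose \emph{separability threshold} $\tau(s) := \sup\{t : \rho(t,s) \in \Sep\}$ can be analyzed directly. The target is to show that $\tau$ is continuous but not piecewise algebraic, for instance by exhibiting an essential singularity or a logarithmic/transcendental dependence on $s$ arising from the spectral behavior of an auxiliary parametric operator whose eigenvalues are not algebraic functions of $s$. Combined with the semialgebraicity forced by the dual SDP representation, this yields the desired contradiction.

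The main obstacle is precisely this third step: upgrading the known \emph{qualitative} fact $\Sep \subsetneq \PPT$ into a \emph{quantitative} failure of semialgebraicity of a threshold function. The paper's emphasis on elementary methods suggests that the construction should be low-dimensional and explicit---exploiting the density of PPT-entangled states near $\partial \Sep$ and their non-polynomial rate of approach---rather than invoking the machinery of real algebraic geometry used by Scheiderer.
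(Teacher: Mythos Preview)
Your first two steps are fine and indeed mirror the paper: reduce to $(3,3)$ and $(4,2)$ by slicing, then work on the dual side with positive maps / nonnegative biquadratic Hermitian forms. The fatal gap is in your third step, and it is a confusion between \emph{semialgebraic} and \emph{semidefinite representable}.

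The set $\Sep(n,m)$ is already semialgebraic, unconditionally: it is the convex hull of a real algebraic set, and Tarski--Seidenberg makes convex hulls of semialgebraic sets semialgebraic (the paper even spells this out in its preliminaries). Consequently, for any semialgebraic family $\rho(t,s)$, your threshold $\tau(s) = \sup\{t : \rho(t,s) \in \Sep\}$ \emph{is} a semialgebraic function of $s$, regardless of whether $\Sep$ admits an SDP lift. No logarithmic or transcendental behavior can ever appear in such a $\tau$, so the contradiction you are hunting for does not exist. The whole point of the Helton--Nie problem, and of Scheiderer's and this paper's results, is that ``semialgebraic'' is strictly weaker than ``SDP representable''; your Step~2 extracts only semialgebraicity from the hypothetical lift, which buys you nothing.

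What the paper actually extracts from an SDP lift is much sharper: a \emph{finite} list of semialgebraic functions $f_1,\dots,f_r$ such that every nonnegative Hermitian polynomial of the relevant bidegree is a sum of squares of \emph{real linear combinations of the $f_j$}. This is the Gouveia--Parrilo--Thomas factorization, read as a universal SOS certificate. The paper then uses the downward-closed support to translate so that all $f_j$ are smooth at the origin, scales by $t$, and applies a Taylor/Puiseux argument to force any such SOS decomposition down to a genuine \emph{polynomial} SOS decomposition. This contradicts the existence of a nonnegative-but-not-SOS witness, supplied by the Choi map in the $(3,3)$ case and by a carefully dehomogenized Ha--Kye map in the $(4,2)$ case. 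Your proposal is missing this entire mechanism; the threshold-function idea cannot be repaired into a proof.
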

\begin{remark}
Some remarks concerning Theorem \ref{thm:main}:
\begin{itemize}
\item Note that Theorem \ref{thm:main} contains as a special case the fact that whenever $\Sep(n,m) \neq \PPT(n,m)$, then there is no representation of $\Sep(n,m)$ as $\{\rho \in \H^{nm} : \Tr[\rho] = 1 \text{ and } (I \otimes \Phi_i)(\rho) \psd 0, \; \forall i=1,\ldots,k\}$ where $\Phi_1,\ldots,\Phi_k:\M_m\rightarrow \M_n$ is a finite family of positive maps. This is because the latter set is semidefinite representable (in fact it is a spectrahedron). Let us mention that if one is only interested in \emph{approximating} the set $\Sep(n,m)$,  Aubrun and Szarek \cite{aubrun2017dvoretzky} gave a lower bound on the number $k$ of positive maps needed.
\item Our result also includes as a special case the fact that there is no finite $k$ such that $\DPS_k(n,m) = \Sep(n,m)$, when $n+m > 5$. We note that our result is a strict generalization of this fact. Indeed, the failure of the DPS hierarchy to converge in a finite number of levels does not preclude by itself the existence of another semidefinite program that represents $\Sep(n,m)$ exactly. There are well-known examples of convex sets where the sum-of-squares hierarchy (of which DPS can be seen as a particular instance) is never exact and yet a finite semidefinite representation does exist, see e.g., \cite[Example 3.7]{netzer2010exposed}.
\item Observe that if $\Sep(n,m)$ has no semidefinite representation, then the same is true for $\Sep(N,m)$ for $N \geq n$. This is because $\Sep(n,m)$ can be realized as a linear section of $\Sep(N,m)$ as follows:
\[
\Sep(n,m) \simeq \left\{ \rho \in \Sep(N,m) : (\Tr_2 \rho)_{ii} = 0 \; \forall i=n+1,\ldots,N \right\}
\]
where $\Tr_2 \rho$ is the result of tracing out the second subsystem from $\rho$. Indeed, setting $(\Tr_2 \rho)_{ii} = 0$ implies that in any representation of $\rho$ as $\rho = \sum_{k} p_k x_k x_k^{\dagger} \otimes y_k y_k^{\dagger}$, the vectors $x_k$ must satisfy $(x_{k})_{i} = 0$ for all $i=n+1,\ldots,N$, i.e., that $x_k \in \CC^n \times \{0\}^{N-n} \simeq \CC^n$. To prove our theorem it thus suffices to prove that $\Sep(3,3)$ and $\Sep(4,2)$ have no semidefinite representations.
\end{itemize}
\end{remark}

\paragraph{Helton-Nie conjecture} The question of finding semidefinite representations for convex sets has attracted a lot of attention in the optimization community \cite{nemirovskiICM,gouveia2011lifts}. Helton and Nie \cite{helton2009sufficient} gave sufficient conditions for a set to have a semidefinite representation, and conjectured that any convex \emph{semialgebraic} set has a semidefinite representation. (A set is semialgebraic set if it can be described using a finite boolean combinations of polynomial equations and inequalities. One can verify that $\Sep(n,m)$ is a semialgebraic set, see Section \ref{sec:bgsa}.) In his breakthrough paper, Scheiderer  \cite{scheidererSDR} disproved this conjecture and exhibited convex semialgebraic sets that have no semidefinite representations.

Our proof of Theorem \ref{thm:main} is inspired from the arguments of Scheiderer. Compared to the paper of Scheiderer the present paper has two main contributions. First, the proof we give simplifies the arguments of Scheiderer and does not rely on any specialized results from algebraic geometry. We only use basic results from analysis (Taylor expansions), and some standard facts about semialgebraic sets and functions which are elementary to state. The proof should thus be accessible to readers in quantum information and optimization. The second contribution is the application of the method of proof for $\Sep(n,m)$ which is defined in terms of complex numbers. This turns out to cause certain difficulties as certain standard facts about real polynomials are not true about Hermitian polynomials, particularly on the relation between homogeneous polynomials and their dehomogenizations (see Section \ref{sec:proofmain} and Appendix \ref{sec:hakye} for more details). %We point out more precisely the relations between our proof and that of Scheiderer later in the paper.

\paragraph{Main technical result} 
%The proof of Theorem \ref{thm:main} relies on connections between the sets $\Sep$, $\PPT$, and nonnegative polynomials and sums of squares. One can easily verify, and we explain this in more detail later, that the dual object to $\Sep$ (resp. $\PPT$) is a convex cone of nonnegative polynomials (resp. sums of squares). 
Our main technical result, Theorem \ref{thm:mainhermitian} below and of which Theorem \ref{thm:main} is a corollary, gives a general way to construct a convex set with no semidefinite representation from a nonnegative Hermitian polynomial that is not a sum of squares.
 We recall that a \emph{Hermitian polynomial} $p(z)$ is a polynomial with complex coefficients in the indeterminates $(z,\bar{z})=(z_1,\ldots,z_n,\bar{z}_1,\ldots,\bar{z}_n)$ such that $p(z) \in \RR$ for all $z \in \CC^n$. A Hermitian polynomial is a sum of squares if it can be written as a sum of squares of Hermitian polynomials. (More details about Hermitian polynomials are given in Section \ref{sec:prelim}.) For the statement of the theorem, we use the monomial notation $z^{u} = \prod_{i=1}^n z_i^{u_i}$ for $u \in \NN^n$.

\begin{theorem}[General theorem]
\label{thm:mainhermitian}
Let $p(z) = \sum_{(u,v) \in A} p_{uv} z^u \bar{z}^v$ be a Hermitian polynomial supported on $A \subset \NN^n\times \NN^n$, and assume that $p$ is nonnegative on $\CC^n$ but not a sum of squares. Assume furthermore that $A$ is \emph{downward closed}, i.e., if $(u,v) \in A$ then all $(u',v') \in \NN^n \times \NN^n$ with $0 \leq u' \leq u$ and $0\leq v' \leq v$ are in $A$. Define the monomial map $\m_A:\CC^n \rightarrow \CC^{|A|}$, $z\mapsto \left[ z^{u} \bar{z}^{v}\right]_{(u,v) \in A}$ for $z \in \CC^n$. Then the convex set
\begin{equation}
\label{eq:CA}
\C_A = \cl \conv\left\{ \m_A(z) : z \in \CC^n \right\}
\end{equation}
is not semidefinite representable, where $\cl$ denotes topological closure.
\end{theorem}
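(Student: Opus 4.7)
The plan is to adapt Scheiderer's strategy, implemented via the Gouveia--Parrilo--Thomas positive semidefinite factorization theorem, to the Hermitian polynomial setting. Suppose for contradiction that $\C_A$ admits a semidefinite representation of size $r$. Then its slack operator has a psd factorization of size $r$: letting $Q$ denote the cone of Hermitian polynomials supported on $A$ that are nonnegative on $\CC^n$, there exist maps $a : \CC^n \to \H^r$ and $b : Q \to \H^r$ taking positive semidefinite values such that
\[
q(z) \;=\; \Tr\bigl(a(z)\, b(q)\bigr) \qquad \text{for every } z \in \CC^n \text{ and every } q \in Q.
\]
Since $\C_A$ is semialgebraic, a semialgebraic selection argument lets one take $a$ to be a semialgebraic function of $(z,\bar z) \in \RR^{2n}$.

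Applying this identity to the given polynomial $p \in Q$ (nonnegative but not SOS) and diagonalizing $b(p) = \sum_i \beta_i \beta_i^{\dagger}$ yields
\[
p(z) \;=\; \sum_i \beta_i^{\dagger}\, a(z)\, \beta_i,
\]
a decomposition of $p$ as a sum of nonnegative semialgebraic scalar functions. The key observation is that if one could factor $a(z) = U(z) U(z)^{\dagger}$ with $U$ polynomial in $(z,\bar z)$, then each summand would equal $\|U(z)^{\dagger}\beta_i\|^2 = \sum_j |q_{ij}(z)|^2$, which is a sum of squares of Hermitian polynomials via the identity $|f|^2 = (\Re f)^2 + (\Im f)^2$. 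This would contradict the hypothesis on $p$, so the entire task reduces to producing a polynomial factorization of $a$ whose factors, once squared, are supported on $A$.

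This is the main obstacle. I would tackle it by choosing the base point $z_0 = 0 \in \CC^n$ and analyzing the Taylor expansion of $a$ at $0$. For any fixed $\beta$, the scalar function $\beta^{\dagger}a(z)\beta$ is a Hermitian polynomial whose monomials lie in $A$, so the ``visible'' monomials of $a$ itself are constrained both in degree and in support. Combined with the semialgebraic regularity of $a$, an inductive Taylor-jet argument should force $a$ to be polynomial, or at least to admit a polynomial square root, with support contained in the natural ``half'' of $A$; here downward closedness of $A$ is what guarantees that the resulting polynomial factors, once squared, remain supported on $A$.

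The deepest technical point, as the author flags in the introduction, is the Hermitian-versus-real distinction. In the real case one would typically homogenize $p$ with an auxiliary variable to reduce to the graded setting and invoke standard facts relating nonnegative and SOS homogeneous polynomials; for Hermitian polynomials the passage between a polynomial and its homogenization behaves differently and requires care (this is the content of Appendix \ref{sec:hakye}). Once this subtlety is handled, the polynomial factorization of $a$ yields an honest Hermitian sum-of-squares decomposition of $p$, contradicting the hypothesis and completing the proof.
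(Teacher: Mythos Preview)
Your starting point is right: from a semidefinite representation of $\C_A$ one extracts, via the factorization theorem, a finite collection of \emph{semialgebraic} functions $f_1,\dots,f_r$ (equivalently, a psd-valued semialgebraic $a(z)$) such that every nonnegative Hermitian polynomial supported on $A$ is a sum of squares from their span. From here on, however, the argument has real gaps.

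First, the sentence ``for any fixed $\beta$, the scalar function $\beta^{\dagger}a(z)\beta$ is a Hermitian polynomial whose monomials lie in $A$'' is false in general. The entries of $a(z)$ are only semialgebraic; only the full contraction $\Tr(a(z)b(q))$ is polynomial, not the individual summands $\beta_i^{\dagger}a(z)\beta_i$. So you cannot read off the ``support'' of $a$ from this. Second, you give no mechanism for passing from semialgebraic to polynomial. The paper's proof supplies exactly this mechanism, and it is not a Taylor-jet analysis of $a$ at a chosen point. It uses that semialgebraic functions are smooth almost everywhere (Theorem~\ref{thm:sasmooth}) to find a common smooth point $z_0$; then it \emph{uses downward closedness to translate} the problem so that $z_0=0$ (if $q$ is supported on $A$ then $q(\cdot+z_0)$ is still supported on $A$, so the translated $f_i$'s still have property~\eqref{eq:propsos}). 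This is the actual role of downward closedness---not the support-preservation-after-squaring you describe. Finally, to turn smooth sums of squares into polynomial ones, the paper introduces the one-parameter family $p_t(z)=t^{2d}p(z/t)$, gets semialgebraic coefficient functions $a_j(t)$, and invokes the Puiseux expansion (Theorem~\ref{thm:puiseux}) together with the elementary Proposition~\ref{prop:taylorexpansion}. None of these three ingredients---a.e.\ smoothness, the translation trick, and the Puiseux/scaling argument---appear in your outline, and without them the passage from ``semialgebraic square root'' to ``polynomial square root'' does not go through.

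One more remark: Appendix~\ref{sec:hakye} is not about the proof of Theorem~\ref{thm:mainhermitian}. It concerns the \emph{application} to $\Sep(4,2)$, namely exhibiting a specific dehomogenized biquadratic Hermitian polynomial that is nonnegative but not SOS. The Hermitian-vs-real subtlety you allude to lives there, not inside the proof of the general theorem.
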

The set of separable states is of the form \eqref{eq:CA} for well-chosen set $A$. Indeed, dropping the normalization condition and letting $\Sepcone(n,m)$ be the convex \emph{cone} of separable states, we have:
\[
\begin{aligned}
\Sepcone(n,m) &= \conv \left\{ \left[x_i \bar{x}_j y_k \bar{y}_l\right]_{\substack{1 \leq i,j \leq n\\ 1 \leq k,l \leq m}} : (x,y) \in \CC^n \times \CC^m \right\}\\
 &= \conv \left\{ \left[x^{\alpha} \bar{x}^{\beta} y^{\gamma} \bar{y}^{\delta}\right]_{|\alpha|=|\beta|=|\gamma|=|\delta|=1} : (x,y) \in \CC^n \times \CC^m \right\}
 \end{aligned}
\]
where for $\zeta \in \NN^k$ we let $|\zeta| = \sum_{i=1}^k \zeta_i$. This shows that $\Sepcone(n,m) = \C_A$ where
\begin{equation}
\label{eq:Asep}
 A = \left\{(\alpha,\beta,\gamma,\delta) \in (\NN^n \times \NN^n) \times (\NN^m \times \NN^m) : |\alpha|=|\beta|=|\gamma|=|\delta|=1\right\}.
 \end{equation}
The attentive reader will notice that this set $A$ is \emph{not} downward closed, and so does not satisfy the condition of Theorem \ref{thm:mainhermitian}. As a matter of fact, to prove Theorem \ref{thm:main} we apply Theorem \ref{thm:mainhermitian} with a \emph{dehomogenization} of $A$ which satisfies the downward closed condition, and then homogenize back to get the desired convex cone. The details are explained in Section \ref{sec:proofmain}.

\paragraph{Overview of proof} We briefly sketch the main ideas for the proof of Theorem \ref{thm:mainhermitian}.
\begin{itemize}
\item We first show that if the set $\C_A$ has a semidefinite representation, then there exists a finite number of functions $f_1,\ldots,f_r:\RR^{2n} \simeq \CC^n \rightarrow \RR$ such that any nonnegative Hermitian polynomial supported on $A \cup \{(\mathbf{0},\mathbf{0})\}$ can be written as a sum of squares from $\linspan_{\RR}(f_1,\ldots,f_r)$.  This characterization of semidefinite representations via sums of squares is not new: it follows from the factorization theorem of Gouveia, Parrilo and Thomas \cite{gouveia2011lifts} and its sum-of-squares interpretation see e.g., \cite{fawziphdthesis}. We note that a similar characterization is also used in Scheiderer's paper, see \cite[Theorem 3.4]{scheidererSDR}. 
\item One of the main observations needed to prove Theorem \ref{thm:mainhermitian} is to note that the functions $f_1,\ldots,f_r$ can be chosen to be \emph{semialgebraic}. (We recall the precise definition of semialgebraic functions in Section \ref{sec:prelim}.) One key property of such functions that turns out to be particularly important is that they are smooth almost everywhere. Combining this property with a simple observation regarding smooth sum of squares decompositions of homogeneous polynomials allows us to prove Theorem \ref{thm:mainhermitian} already in the special case where $p$ is a homogeneous polynomial. This allows us to prove that $\Sep(n,m)$ is not semidefinite representable when $(n,m) = (5,3)$ or $(4,4)$. The complete proof of Theorem \ref{thm:mainhermitian} which allows us to cover the cases $(n,m)=(4,2)$ and $(3,3)$ for separable states, requires an additional technical argument using Puiseux expansions for univariate continuous semialgebraic functions.
\end{itemize}

\paragraph{Real version of Theorem \ref{thm:mainhermitian}} We note that one can state an analogue of Theorem \ref{thm:mainhermitian} dealing with real polynomials instead of Hermitian polynomials. The proof is similar, and we state it below just for convenience and for future reference.

\begin{theorem}[Main theorem for real polynomials]
\label{thm:mainreal}
Let $p(x) = \sum_{u} p_u x^u \in \RR[x]$ where $A \subset \NN^n$ finite, be a real polynomial that is nonnegative on $\RR^n$ but not a sum of squares. Assume furthermore that $A$ is \emph{downward closed}, i.e., if $u \in A$ then all $u' \in \NN^n$ with $0 \leq u' \leq u$ are in $A$. Define the monomial map $\m_A(x) = \left[ x^{u} \right]_{u \in A}$ for $x \in \RR^n$. Then the convex set
\[
\cl \conv\left\{ \m_A(x) : x \in \RR^n \right\}
\]
is not semidefinite representable.
\end{theorem}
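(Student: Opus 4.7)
The plan is to run the proof of Theorem \ref{thm:mainhermitian} essentially verbatim, replacing the algebra of Hermitian polynomials by the usual real polynomial ring $\RR[x_1,\ldots,x_n]$ throughout. Suppose for contradiction that $C := \cl\conv\{\m_A(x) : x \in \RR^n\}$ admits a semidefinite representation of size $r$.

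First, I would invoke the factorization theorem of Gouveia, Parrilo, and Thomas together with its sum-of-squares reformulation, exactly as in the first bullet of the overview of proof for Theorem \ref{thm:mainhermitian}. Since $A$ is downward closed and nonempty, it contains $0$, so the coordinate $x^0 = 1$ appears in $\m_A$; hence for any polynomial $q = \sum_{u\in A} q_u x^u$, the associated linear functional on $C$ is nonnegative exactly when $q \geq 0$ on $\RR^n$. The factorization theorem then yields a real vector space $V$ of functions $\RR^n\to\RR$ with $\dim V \leq r$, such that every nonnegative polynomial supported on $A$ can be written as a finite sum of squares of elements of $V$. By a standard semialgebraic selection argument, a basis $f_1,\ldots,f_r$ of $V$ may be chosen to consist of semialgebraic functions.

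Second, applying this decomposition to the given polynomial $p$ itself yields $p = \sum_{i=1}^N g_i^2$ with $g_i$ semialgebraic members of $V$. The goal is now to show that each $g_i$ must in fact be a polynomial, which would contradict the hypothesis that $p$ is not a sum of squares. In the homogeneous case, combining the fact that semialgebraic functions are $C^\infty$ on a dense open set with a scaling and Taylor expansion argument shows that each $g_i$ coincides almost everywhere, and hence everywhere, with a homogeneous polynomial of degree $\deg(p)/2$. For the general (nonhomogeneous) case, an additional technical ingredient is needed: restricting $g_i$ along generic one-parameter curves and invoking the Puiseux expansion of continuous semialgebraic functions of one variable, one obtains convergent fractional-exponent expansions that must match the polynomial identity $p = \sum_i g_i^2$ coefficientwise. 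Using the downward closed support of $A$ to control which monomials are admissible in each square $g_i^2$, one forces the fractional exponents to vanish and recovers polynomial summands, producing the desired contradiction.

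The main obstacle in this plan is precisely this final step of promoting a semialgebraic sum-of-squares certificate to a polynomial one; the downward closed hypothesis on $A$ is exactly what provides the combinatorial control needed for the Puiseux matching to succeed, since it restricts the monomials that can appear in any square $g_i^2$ once $g_i$ is known to have the correct degree structure. The transfer from the Hermitian setting to the real one should cause no essential difficulty, since the main technical tools (semialgebraic selection, Puiseux expansions, and smoothness almost everywhere of semialgebraic functions) apply equally well to real semialgebraic functions, and the delicate relation between homogeneous and dehomogenized polynomials noted in Section \ref{sec:proofmain} for the Hermitian case does not arise in the purely real setting.
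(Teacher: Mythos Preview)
Your first step (factorization theorem $\Rightarrow$ finitely many semialgebraic functions $f_1,\ldots,f_r$ such that every nonnegative polynomial supported on $A$ is a sum of squares from $V=\linspan(f_1,\ldots,f_r)$) is exactly what the paper does (Theorem~\ref{thm:lifts}). The second half of your plan, however, diverges from the paper's argument in a way that leaves a genuine gap.

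You propose to write $p=\sum_i g_i^2$ once and then argue that each individual $g_i$ must be a polynomial. This is not what the paper proves, and in general it is false: nothing prevents the specific $g_i\in V$ from being genuinely non-polynomial semialgebraic functions (think of $x^4=(x|x|)^2$). The paper never claims the $g_i$ are polynomials; instead it manufactures \emph{new} polynomial summands as Taylor coefficients. Concretely, the paper (i) uses the downward closed hypothesis to \emph{translate} the $f_i$ so that they become smooth at the origin (if $q$ is supported on $A$ then so is $q(\cdot+z_0)$, hence property~\eqref{eq:propsos} survives the shift); (ii) applies \eqref{eq:propsos} not to $p$ alone but to the whole one-parameter family $p_t(x)=t^{2d}p(x/t)$, still supported on $A$, obtaining $p_t=\sum_j (a_j(t)^{\T} F)^2$; (iii) applies Puiseux (Theorem~\ref{thm:puiseux}) to the \emph{coefficient functions} $t\mapsto a_j(t)$, not to the $g_i$ along curves; and (iv) Taylor-expands $t\mapsto a_j(t^m)^{\T}F(t^m x)$ at $t=0$, using smoothness of $F$ at the origin, to extract polynomials whose squares sum to $p$ (Proposition~\ref{prop:taylorexpansion}).

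Your description misses step (i) entirely, and misidentifies the role of the downward closed hypothesis: it is used to guarantee that translation preserves the support condition (so that one may assume smoothness of the $f_i$ at $0$), not to ``control which monomials are admissible in each square $g_i^2$''. Likewise, your proposed use of Puiseux---expanding each $g_i$ along generic curves and matching fractional exponents against $p$---is not the argument in Section~\ref{sec:proofmainhermitian}, and it is unclear how to make it rigorous; the paper's version sidesteps this by putting the semialgebraic one-variable structure into the \emph{scalar} coefficients $a_j(t)$ rather than into the spatial functions.
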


The theorem above can be used to recover the result of Scheiderer \cite[Corollary 4.25]{scheidererSDR}, that the cone $P_{n,2d}$ of nonnegative (real) forms in $n$ variables of degree $2d$ is not semidefinite representable when it is distinct from $\Sigma_{n,2d}$, the cone of sums of squares. Indeed, it suffices to take $p$ in Theorem \ref{thm:mainreal} to be a dehomogenization of a nonnegative form that is not a sum of squares, and to use the well-known fact that a convex set has a semidefinite representation if and only if its dual has one.

\paragraph{Organization} The paper is organized as follows. In Section \ref{sec:prelim} we set some of the notations and present some background material on Hermitian polynomials, sums of squares, and semialgebraic sets and functions that are useful for the proof of the main theorem. In Section \ref{sec:sdplifts} we review the connection between the existence of semidefinite programming representations, and sums of squares. The proof of Theorem \ref{thm:mainhermitian} is in Section \ref{sec:proofmainhermitian} and the proof of Theorem \ref{thm:main} in Section \ref{sec:proofmain}.

\section{Preliminaries}
\label{sec:prelim}

We recall in this section some results on Hermitian polynomials, the duality $\Sep$/nonnegative polynomials and $\PPT$/sums of squares and semialgebraic sets and functions.

\subsection{Hermitian polynomials}

For $z \in \CC^n$, we denote the elementwise complex conjugate of $z$ by $\bar{z} = (\bar{z}_1,\ldots,\bar{z}_n)$. If $u \in \NN^n$ we define the monomial $z^u = z_1^{u_1} \dots z_n^{u_n}$. A Hermitian polynomial $p(z)$ is a polynomial in $z$ and $\bar{z}$ of the form
\begin{equation}
\label{eq:hermpA}
p(z) = \sum_{(u,v) \in A} p_{uv} z^u \bar{z}^v \qquad (A \subset \NN^n \times \NN^n)
\end{equation}
such that $p(z) \in \RR$ for all $z \in \CC^n$. This is equivalent to saying that  $p_{uv} = \overline{p_{vu}}$ for all $u,v$. The \emph{support} of $p$ is $\supp(p) = \{ (u,v) : p_{uv} \neq 0\} \subset \NN^n \times \NN^n$. 
%The \emph{total degree} of $p$ is $\tdeg(p) = \max\{|u|+|v| : p_{uv} \neq 0\}$. 
The Hermitian polynomial $p$ is \emph{nonnegative} if $p(z) \geq 0$ for all $z \in \CC^n$. Further, we say that $p$ is a \emph{sum of squares} if we can write
\begin{equation}
\label{eq:sosdef}
p = \sum_{k} q_k^2
\end{equation}
for Hermitian polynomials $q_k$.
%If $p$ is nonnegative then its total degree is necessarily even; if $p$ is a sum-of-squares then $\tdeg(q_k) \leq \tdeg(p)/2$.
If $p(z)$ is a Hermitian polynomial we will often consider the \emph{real} polynomial $P(a,b) = p(a+ib)$ in $\RR[a_1,\ldots,a_n,b_1,\ldots,b_n]$. One can check that $p$ is a sum-of-squares if and only if $P$ is a sum-of-squares of real polynomials.

\begin{remark}[Sums of squares for Hermitian polynomials]
Another common definition of a Hermitian polynomial $p(z)$ being a sum-of-squares is that $p$ can be written as $p(z) = \sum_{k} |g_k(z)|^2$ where $g_k$ are (holomorphic) polynomials in $z$ \emph{only} (and not in $\bar{z}$). Clearly if $p$ has such a representation then it is a sum-of-squares in the sense \eqref{eq:sosdef} since then $p = \sum_{k} \Re[g_k]^2 + \Im[g_k]^2$ and $\Re[g_k]$ and $\Im[g_k]$ are both Hermitian polynomials. The converse however is not true. It is possible that a polynomial $p$ has a representation \eqref{eq:sosdef} and cannot be written as a sum of modulus squares of holomorphic polynomial mappings. See e.g., \cite{putinarangelo} for more on this distinction. In this paper we only work with the definition \eqref{eq:sosdef} of sums of squares.
\end{remark}

\subsection{$\Sep$, $\PPT$, nonnegative polynomials, and sums of squares}

For convenience, we will work in this paper with the \emph{cone} of separable states, where we drop the normalization condition:
\[
\Sepcone(n,m) = \conv \left\{ x x^{\dagger} \otimes yy^{\dagger} : x \in \CC^n, y \in \CC^m \right\}.
\]
One can verify that $\Sep(n,m)$ is the compact slice $\Sep(n,m)=\Sepcone(n,m) \cap \{\rho : \Tr \rho = 1\}$.\footnote{Indeed if $\rho = \sum_{i} p_k x_k x_k^{\dagger} \otimes y_k y_k^{\dagger}$ with $\Tr \rho = 1$ and $p_k \geq 0$, then by redefining $p_k \leftarrow p_k |x_k|^2 |y_k|^2$ we can assume without loss of generality that $|x_k|=|y_k|=1$. Taking the trace on both sides of $\rho = \sum_{i} p_k x_k x_k^{\dagger} \otimes y_k y_k^{\dagger}$ tells us that $1 = \sum_k p_k$ since $\Tr \rho = 1$, i.e., $\rho \in \Sep(n,m)$.} Let also $\PPTcone$ be the cone of states that have positive partial transpose, i.e.,
\[
\PPTcone(n,m) = \left\{ \rho \in \H^{nm} : \rho \psd 0 \text{ and } (I \otimes \T)(\rho) \psd 0 \right\}
\]
so that $\PPT(n,m) = \PPTcone(n,m) \cap \left\{ \rho : \Tr \rho = 1\right\}$.

\paragraph{Dual of $\Sep$} For any integer $k$, let $\M_k = \CC^{k\times k}$. A $\CC$-linear map $\Phi : \M_n \rightarrow \M_m$ that is Hermitian preserving is \emph{positive} if $\Phi(\rho) \psd 0$ whenever $\rho \psd 0$. Equivalently, $\Phi$ is positive if the degree-four Hermitian polynomial $p(x,y) = y^{\dagger} \Phi\left( xx^{\dagger} \right) y$ is nonnegative on $\CC^{n+m} \simeq \CC^n \times \CC^m$. It is well-known that the dual of $\Sepcone(n,m)$ can be identified, via the Choi isomorphism, with the cone of positive maps $\M_n \rightarrow \M_m$ (see e.g., \cite[Table 2.2]{ABMB}).   Equivalently, the dual of $\Sepcone(n,m)$ can be identified with nonnegative degree-four Hermitian polynomials of the form
\begin{equation}
\label{eq:pdual}
p(x,y) = \sum_{\substack{1\leq i,j \leq n\\ 1\leq k,l \leq m}} p_{ijkl} x_i \bar{x}_j y_k \bar{y}_l \qquad (x \in \CC^n, y \in \CC^m)
\end{equation}
where $p_{ijkl} = \Phi(E_{ij})_{lk}$. Polynomials of the form \eqref{eq:pdual} have a \emph{biquadratic} structure: they are quadratic independently in each block of variables $x$ and $y$. The duality between $\Sepcone$ and nonnegative polynomials of the form \eqref{eq:pdual} is in fact immediate from the definition of $\Sepcone$.
% $p(x,y)$ that are linear combinations of monomials $x_i \bar{x}_j y_k \bar{y}_l$ with $1 \leq i,j \leq n$ and $1 \leq k,l \leq m$.

\paragraph{Dual of $\PPT$} Using the identification above, it turns out that the dual of $\PPTcone(n,m)$ corresponds to polynomials $p(x,y)$ that are \emph{sums of squares}. Indeed, it is well-known (see again \cite[Table 2.2]{ABMB}) that the dual cone of $\PPTcone(n,m)$ can be identified, via the Choi isomorphism, with the cone of maps $\Phi : \M_n \rightarrow \M_m$ that are \emph{decomposable}, i.e., that can be written $\Phi = S_1 + S_2 \circ \T$ where $S_1$ and $S_2$ are two \emph{completely positive maps}, and $\T$ is the transpose map. Recall that a map $S:\M_n\rightarrow \M_m$ is completely positive if there exist matrices $V_t$ such that $S(X) = \sum_{t} V_t^{*} X V_t$. One can verify that a map $\Phi$ is decomposable if and only if, the associated Hermitian polynomial \eqref{eq:pdual} is a sum of squares. We did not find any reference for this equivalence, so we include a proof here. (The proofs we found in the literature are only for the direction $\Rightarrow$ in Proposition \ref{prop:decompsos}. The proof of Proposition \ref{prop:decompsos} is a special case of a more general result in \cite{fangfawzi}, joint with Kun Fang, where it is shown that the dual of $\DPS_k$ can be identified with a sum-of-squares condition of degree $k$.)

\begin{proposition}
\label{prop:decompsos}
A map $\Phi:\M_n \rightarrow \M_m$ is decomposable if, and only if, the Hermitian polynomial $p(x,y) = y^{\dagger} \Phi\left( xx^{\dagger} \right) y$ is a sum of squares.
\end{proposition}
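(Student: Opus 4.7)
The plan is to match a sum-of-squares representation of $p(x,y) = y^{\dagger}\Phi(xx^{\dagger})y$ directly with a Kraus-style decomposition $\Phi = S_1 + S_2 \circ \T$. For the easy ($\Rightarrow$) direction, writing each completely positive map in Kraus form $S_1(X) = \sum_t V_t^{\dagger} X V_t$ and $S_2(X) = \sum_s W_s^{\dagger} X W_s$, a direct expansion gives
\[
p(x,y) = \sum_t |x^{\dagger} V_t y|^2 + \sum_s |x^{\T} W_s y|^2,
\]
and using $|f|^2 = (\Re f)^2 + (\Im f)^2$ with $\Re f, \Im f$ Hermitian polynomials exhibits $p$ as a sum of squares.

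The ($\Leftarrow$) direction is the heart of the proof. Suppose $p = \sum_k q_k^2$ with $q_k$ Hermitian. I would exploit the bihomogeneity of $p$---it has $(x,\bar x)$-bidegree $(1,1)$ and $(y,\bar y)$-bidegree $(1,1)$---to drastically restrict the form of each $q_k$. Decomposing $q_k = \sum_d q_k^{(d)}$ into total $x$-degree components and matching powers of $t$ in $\sum_k q_k(tx,y)^2 = t^2 p(x,y)$: the $t^{2d_{\max}}$ coefficient (for $d_{\max} = \max_k \deg_x q_k$) reduces to $\sum_k (q_k^{(d_{\max})})^2 = 0$, a sum of squares of real-valued Hermitian polynomials, which forces each $q_k^{(d_{\max})} = 0$ whenever $d_{\max} > 1$; hence $\deg_x q_k \le 1$. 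The $t^0$ coefficient yields $q_k^{(0)} = 0$, so $\deg_x q_k = 1$ exactly, and repeating the argument in $y$ gives $\deg_y q_k = 1$. Each $q_k$ is therefore bilinear with exactly one $x$- and one $y$-factor per monomial, and the Hermitian condition forces
\[
q_k = \alpha_k + \overline{\alpha_k}, \qquad \alpha_k(x,y) = x^{\T} A_k y + \bar{x}^{\T} B_k y,
\]
for some matrices $A_k, B_k \in \CC^{n\times m}$.

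Next, I expand $q_k^2 = \alpha_k^2 + 2|\alpha_k|^2 + \overline{\alpha_k}^2$ and sort by bidegree in $(x,\bar x,y,\bar y)$. The contributions $\alpha_k^2$ and $\overline{\alpha_k}^2$ live at $y$-bidegrees $(2,0)$ or $(0,2)$, which are absent in $p$ and must cancel in the sum over $k$. Among the four terms of $|\alpha_k|^2 = \alpha_k \overline{\alpha_k}$, only $|x^{\T} A_k y|^2$ and $|\bar{x}^{\T} B_k y|^2$ sit at bidegree $(1,1,1,1)$; the remaining cross terms have bidegrees $(2,0,1,1)$ and $(0,2,1,1)$ and also cancel across $k$. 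Hence
\[
p(x,y) = 2\sum_k |x^{\T} A_k y|^2 + 2\sum_k |\bar{x}^{\T} B_k y|^2.
\]
Recognizing $|\bar{x}^{\T} B_k y|^2 = y^{\dagger} B_k^{\dagger}(xx^{\dagger})B_k y$ and $|x^{\T} A_k y|^2 = y^{\dagger} A_k^{\dagger}\,\T(xx^{\dagger})\,A_k y$, I define the completely positive maps $S_1(X) = 2\sum_k B_k^{\dagger} X B_k$ and $S_2(X) = 2\sum_k A_k^{\dagger} X A_k$, so that $y^{\dagger}\Phi(xx^{\dagger})y = y^{\dagger}(S_1 + S_2 \circ \T)(xx^{\dagger})y$ for all $x, y$. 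Since the rank-one matrices $xx^{\dagger}$ real-span $\H^n$ and all maps involved are $\CC$-linear, this identity upgrades to $\Phi = S_1 + S_2 \circ \T$, proving decomposability.

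The main obstacle is the bidegree bookkeeping in the $(\Leftarrow)$ direction: the terms $\alpha_k^2$, $\overline{\alpha_k}^2$, and the mixed cross terms inside $|\alpha_k|^2$ are individually nonzero and only cancel \emph{after} summing over $k$, so one must track the $(1,1,1,1)$ piece collectively rather than analyze each $q_k$ in isolation. Once the restricted bilinear form of $q_k$ is pinned down, extracting the matrices $A_k, B_k$ and assembling them into the Kraus form of $S_1$ and $S_2$ is routine.
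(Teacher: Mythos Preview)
Your proof is correct and follows essentially the same strategy as the paper: restrict each $q_k$ to be bilinear (one factor from $\{x_i,\bar x_i\}$ and one from $\{y_j,\bar y_j\}$), then sort $q_k^2$ by $(x,\bar x,y,\bar y)$-bidegree to isolate the $(1,1,1,1)$ part. The only difference is in how the bilinearity is obtained: the paper inspects specific coefficients (e.g.\ of $x_i^2\bar x_i^2$ and $x_i\bar x_i x_j\bar x_j$) to kill the unwanted monomials, whereas you use the cleaner scaling argument $p(tx,y)=t^2 p(x,y)$ to bound the total $x$-degree directly. After that point the bidegree bookkeeping and the extraction of the Kraus operators are identical.
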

\begin{proof}

If $S(\rho) = \sum_{t} V_t \rho V_t^{\dagger}$ is a completely positive map then $y^{\dagger} S(xx^{\dagger}) y = \sum_{t} y^{\dagger} V_t xx^{\dagger} V_t^{\dagger} y = \sum_{t} |\bar{y}^{\T} V_t x|^2$ is a sum-of-squares. Also for the transpose map $\T$, we have $y^{\dagger}(S \circ \T) ( xx^{\dagger} ) y = y^{\dagger} S(\bar{x} \bar{x}^{\dagger}) y = \sum_{t} |y^{\T} \bar{V_t} x|^2$ is also a sum-of-squares. It follows that if $\Phi$ is decomposable then $p(x,y) = y^{\dagger} \Phi\left( xx^{\dagger} \right) y$ is a sum-of-squares.

We now prove the converse. Assume $p(x,y) = y^{\dagger} \Phi(xx^{\dagger}) y$ is a sum-of-squares, i.e., $p(x,y) = \sum_{t} q_t(x,y)^2$ for some Hermitian polynomials $q_t$. We need to show that $\Phi$ is decomposable. Since the coefficient of the monomial $x_i^2 \bar{x_i}^2$ in $p$ is 0, we see that $q_t$ cannot have monomials $x_i^2, \bar{x_i}^2$ or $x_i \bar{x_i}$. To be sure, let $\alpha_t, \bar{\alpha_t}, \beta_t$ be the coefficients in $q_t$ of these monomials (note that $\beta_t \in \RR$ since $x_i \bar{x_i}$ is real). The coefficient of $x_i^2 \bar{x_i}^2$ in $\sum_t q_t^2$ is $\sum_{t} 2|\alpha_t|^2 + \beta_t^2 = 0$ which implies that $\alpha_t = \beta_t = 0$ for all $t$.  Similarly, by looking at the coefficient of $x_i \bar{x_i} x_j \bar{x_j}$ in $p$, we see that the $q_t$ cannot have monomials of the form $x_i x_j, \bar{x_i} \bar{x_j}, x_i \bar{x_j}$ or $ \bar{x_i} x_j$. The same of course is true for the $y$'s. Thus this means that $q_t$ must have the form
\[
q_t(x,y) = \underbrace{x^{\T} M_t y}_{g_t} + \underbrace{\bar{x}^{\T} \bar{M_t} \bar{y}}_{\bar{g_t}} + \underbrace{x^{\T} N_t \bar{y}}_{h_t} + \underbrace{\bar{x}^{\T} N_t y}_{\bar{h_t}}
\]
where $M \in \CC^{n\times n}$ and $N \in \CC^{m\times m}$. Squaring $q_t$ we get
\[
q_t^2 = g_t^2 + 2|g_t|^2 + 2g_t h_t + 2g_t \bar{h_t} + \bar{g_t}^2 + 2\bar{g_t} h_t + 2\bar{g_t} \bar{h_t} + h_t^2 + 2 |h_t|^2 + \bar{h_t}^2.
\]
When summing $\sum_{k} q_t^2$ we see that the only terms that can produce monomials of the form $x_i \bar{x_j} y_k \bar{y_{l}}$ (the monomials that appear in $p$) are the terms $2|g_t|^2$ and $2|h_t|^2$. The sum of all the other terms must thus be equal to 0. At the end we get (including the constant 2 in $M_t$ and $N_t$):
\[
p = \sum_{k} |x^{\T} M_t y|^2 + |x^{\T} N_t \bar{y}|^2.
\]
From here it easily follows that $\Phi = S_1 + S_2 \circ \T$ where $S_1(\rho) = \sum_{t} N_t \rho N_t^{\dagger}$ and $S_2(\rho) = \sum_{t} \bar{M_t} \rho \bar{M_t}^{\dagger}$.
\end{proof}

The following diagram summarizes the discussion above.

\[
\begin{array}{ccc}
\Sepcone & \subset & \PPTcone\\
\hspace*{-1.5cm}\text{(duality)}  \Bigg\updownarrow & & \hspace*{1.5cm} \Bigg\updownarrow \text{(duality)}\\
\begin{array}{c}\text{Nonnegative}\\\text{Hermitian polynomials \eqref{eq:pdual}} \end{array} & \supset &
\begin{array}{c}\text{Sum-of-squares}\\\text{Hermitian polynomials \eqref{eq:pdual}} \end{array}
\end{array}
\]

\subsection{Semialgebraic sets and functions}
\label{sec:bgsa}

\paragraph{Semialgebraic sets} A semialgebraic subset of $\RR^n$ is a subset that can be defined by a finite boolean combination of polynomial equations ($P = 0$) and inequalities ($P > 0$) where $P \in \RR[x_1,\ldots,x_n]$. For example a set of the form $\{w \in \RR^D : M_0 + w_1 M_1 + \dots + w_D M_D \psd 0\}$ is semialgebraic since the condition that a matrix is positive semidefinite can be expressed by a finite number of polynomial inequalities. The set of separable states can be shown to be semialgebraic. One can prove this using the celebrated and powerful result of Tarski stating that the projection of a semialgebraic set is semialgebraic. A consequence of Tarski's theorem is that the \emph{convex hull} of a semialgebraic set $S \subset \RR^n$ is semialgebraic. Indeed this is because we can write $\conv(S)$ as the projection on the $x$ component of the following semialgebraic set
\[
\begin{aligned}
\Bigl\{ & (x,\lambda,s_1,\ldots,s_{n+1}) \in \RR^n \times \RR^{n+1} \times (\RR^n)^{n+1} : \\
& \qquad \lambda_1,\ldots,\lambda_{n+1} \geq 0, s_1,\ldots,s_{n+1} \in S, \; x = \sum_{i=1}^{n+1} \lambda_i s_i \text{ and } \sum_{i=1}^{n+1} \lambda_i = 1 \Bigr\}.
\end{aligned}
\]
(Note that, by Carath\'eodory theorem any element in $\conv(S)$ is a convex combination of at most $n+1$ points in $S$.) To see why the set $\Sep(n,m)$ is a semialgebraic subset of $\H^{nm} \simeq \RR^{2 (nm)^2}$ first note that the following set
\begin{equation}
\label{eq:pureproducth}
\left\{ (\rho , x , y) \in \H^{nm} \times \CC^n \times \CC^m \text{ s.t. } \rho = xx^{\dagger} \otimes yy^{\dagger} \text{ and } |x|^2 = |y|^2 = 1 \right\}
\end{equation}
is a semalgebraic subset of $\H^{nm} \times \CC^n \times \CC^m \simeq \RR^{2(nm)^2} \times \RR^{2n} \times \RR^{2m}$ since the equations can all be written as real polynomial equations in the real and imaginary components. By Tarski's theorem it follows that the projection of \eqref{eq:pureproducth} on the $\H^{nm}$ component, which is precisely the set of pure product states, is semialgebraic. Thus $\Sep$ is semialgebraic as the convex hull of a semialgebraic set. 

\paragraph{Semialgebraic functions} A function $f:\RR^n\rightarrow \RR^m$ is called \emph{semialgebraic} if its graph $\{(x,f(x)) : x \in \RR^n\} \subseteq \RR^n\times \RR^m$ is a semialgebraic set. Even though semialgebraic functions form a very broad class of functions, they are tame and possess nice regularity properties. Examples of semialgebraic functions are polynomials, rational functions, or power functions (with rational exponent). Functions that are not semialgebraic are e.g., $\exp(x)$, or the indicator function of the rationals in $\RR$. We state two basic results about semialgebraic functions that will be crucial for us.

\begin{theorem}[Almost everywhere smoothness of semialgebraic functions]
\label{thm:sasmooth}
Let $f:\RR^n\rightarrow \RR$ be a semialgebraic function. Then $f$ is smooth ($C^{\infty}$) everywhere except possibly on the zero set of a polynomial $P \in \RR[x_1,\ldots,x_n] \setminus \{0\}$.
\end{theorem}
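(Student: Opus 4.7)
The plan is to deduce the statement from two standard ingredients of real semialgebraic geometry: a $C^\infty$ cell decomposition compatible with the graph of $f$, and the fact that any semialgebraic subset of $\RR^n$ of dimension strictly less than $n$ is contained in the zero set of a nonzero polynomial.

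First I would apply the semialgebraic cell decomposition theorem to the graph $\{(x,f(x)) : x \in \RR^n\} \subset \RR^{n+1}$. This produces a finite semialgebraic partition $\RR^n = X_1 \sqcup \cdots \sqcup X_N$ such that each $X_i$ is a $C^\infty$ submanifold of $\RR^n$ and the restriction $f|_{X_i}$ is a $C^\infty$ function. Let $U$ be the union of those cells $X_i$ that are open in $\RR^n$; on $U$, $f$ is automatically smooth, so the possibly non-smooth locus is contained in the complement $S = \RR^n \setminus U$, which is a finite union of semialgebraic cells of dimension at most $n-1$.

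Next I would invoke the fact that any semialgebraic set $S \subseteq \RR^n$ of dimension strictly less than $n$ is contained in a real algebraic hypersurface $\{P = 0\}$ for some nonzero $P \in \RR[x_1,\ldots,x_n]$. The key input is that the Zariski closure $\overline{S}^{\mathrm{Zar}}$ has the same real dimension as $S$ itself, so $\dim \overline{S}^{\mathrm{Zar}} < n$, meaning $\overline{S}^{\mathrm{Zar}}$ is a proper real algebraic variety and is therefore cut out by the vanishing of at least one nonzero polynomial. Any such polynomial $P$ witnesses the conclusion of the theorem.

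The main obstacle, if one wanted a fully self-contained argument, is the $C^\infty$ cell decomposition theorem itself. A standard proof proceeds by induction on $n$ via cylindrical algebraic decomposition: one cuts $\RR^{n-1}$ along the discriminant loci of the polynomials defining (the graph of) $f$, which guarantees both that the number of real roots in each fiber is constant and that those roots depend analytically (hence smoothly) on the base variables; one then lifts this base decomposition to $\RR^n$ by taking graphs and open bands between graphs. Since this material is classical (Bochnak--Coste--Roy, or van den Dries), I would simply cite it rather than reproduce it.
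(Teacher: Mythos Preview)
Your proposal is correct and follows the standard route (a $C^\infty$ cell decomposition compatible with the graph of $f$, followed by the observation that the lower-dimensional complement lies in a real hypersurface). The paper does not give its own argument for this theorem at all: it simply cites \cite[Theorem 1.7]{son2016genericity}, so your sketch is strictly more detailed than what the paper provides, and there is no alternative approach to compare against.
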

\begin{proof}
See e.g., \cite[Theorem 1.7]{son2016genericity}.
\end{proof}
Clearly Theorem \ref{thm:sasmooth} is not true for general functions, cf. the indicator function of the rationals in $\RR$.
The second result that we will need concerns semialgebraic functions in one variable.

\begin{theorem}[Puiseux expansion for one-dimensional semialgebraic functions]
\label{thm:puiseux}
Assume $f:(0,\eta) \rightarrow \RR$ where $\eta > 0$, is a semialgebraic continuous function that is bounded. Then $f$ can be extended by continuity to the interval $[0,\eta)$. Additionally, there exists an integer $m$ such that the map $t\mapsto f(t^m)$ is $C^{\infty}$ on $[0,\epsilon)$ for some $0 < \epsilon < \eta$.
\end{theorem}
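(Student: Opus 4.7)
The plan is to first show that the graph of $f$ lies on a real algebraic curve, and then invoke the classical Newton--Puiseux theorem, which parametrizes the complex roots of a bivariate polynomial by convergent power series after a suitable substitution $t = s^m$. Specifically, since $f$ is semialgebraic and its graph $\Gamma = \{(t,f(t)) : t \in (0,\eta)\} \subset \RR^2$ has dimension at most $1$, by standard semialgebraic geometry (cell decomposition, see e.g.\ Bochnak--Coste--Roy) there exists a nonzero polynomial $P(t,y) \in \RR[t,y]$ with $P(t,f(t)) = 0$ for all $t \in (0,\eta)$. Writing $P(t,y) = \sum_{k=0}^{d} a_k(t) y^k$ with $a_d \not\equiv 0$, the Newton--Puiseux theorem then gives an integer $m \geq 1$ and finitely many convergent complex Laurent series $y_j(s) = \sum_{k \geq k_j} c_{j,k}\, s^k$ defined on a common disk $|s| < \delta$ in $\CC$, such that for all small $s$ the roots in $y$ of $P(s^m, y) = 0$ are exactly the values $\{y_j(s)\}$.

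Next, I would argue that the continuous function $g(s) := f(s^m)$ on $(0, \eta^{1/m})$ agrees, on some small interval $(0, \epsilon')$, with a single branch $y_{j_0}$. For each $j$ the set $E_j = \{s \in (0,\delta) : g(s) = y_j(s)\}$ is semialgebraic (as $g$ is semialgebraic and the graphs of the $y_j$ are, at least locally as real-analytic functions, semialgebraic), and $\bigcup_j E_j \supseteq (0,\min(\delta,\eta^{1/m}))$. A finite semialgebraic partition of an interval accumulating at $0$ must contain one piece that itself contains an interval $(0,\epsilon')$; thus $g \equiv y_{j_0}$ on $(0,\epsilon')$ for some $j_0$.

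Now the boundedness of $f$ forces $k_{j_0} \geq 0$: otherwise $|g(s)| \to \infty$ as $s \to 0^+$ along the leading monomial $|c_{j_0,k_{j_0}}| s^{k_{j_0}}$, contradicting $|g(s)| = |f(s^m)| \leq \sup_{(0,\eta)}|f| < \infty$. Moreover, since $g(s) \in \RR$ for $s > 0$ real small and equals the convergent sum $\sum_{k \geq 0} c_{j_0,k}\, s^k$ there, successively evaluating at a sequence $s \to 0$ and matching $k$-th derivatives shows $c_{j_0,k} \in \RR$ for all $k$. Consequently $s \mapsto f(s^m)$ is given by a convergent real power series on $(-\epsilon, \epsilon)$ for $\epsilon = (\epsilon')^{1/m}$, hence is $C^\infty$ on $[0,\epsilon)$, and setting $f(0) := c_{j_0,0}$ extends $f$ continuously to $[0,\eta)$ (using the semialgebraic intermediate value/continuity property to propagate the extension from $(0,\epsilon')$ to all of $[0,\eta)$).

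The main technical obstacle is the two classical ingredients being imported from algebraic geometry: the containment of a $1$-dimensional semialgebraic set in a real algebraic curve, and the convergence (not merely formal existence) of the Puiseux series parametrizing the roots of $P(t,y)$ near $t=0$. Both are standard, but they are precisely the nontrivial inputs that make the statement work; the rest of the argument is an elementary continuity and boundedness analysis selecting the correct real branch.
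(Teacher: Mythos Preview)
The paper does not actually prove this theorem: its proof consists solely of two citations (to Basu--Pollack--Roy and to Coste's notes). Your sketch, by contrast, supplies the standard argument behind those references, namely: embed the graph in an algebraic curve, invoke Newton--Puiseux to parametrize the branches after a substitution $t=s^m$, select the real branch matched by the continuous semialgebraic function $g(s)=f(s^m)$, and use boundedness to rule out negative exponents. This is correct in outline and is precisely the route taken in the cited sources, so in that sense your proposal and the paper's ``proof'' agree.

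One genuine imprecision to fix: you justify that each $E_j=\{s:g(s)=y_j(s)\}$ is semialgebraic by saying the $y_j$ are ``real-analytic functions, [hence] semialgebraic.'' Real-analytic does \emph{not} imply semialgebraic (e.g.\ $\sin$). The correct reason is that the real branches $y_j$ are \emph{algebraic} functions---they are connected components of the real algebraic curve $\{(s,y)\in\RR^2: P(s^m,y)=0\}$ near $s=0$---and connected components of a semialgebraic set are semialgebraic. Alternatively, bypass semialgebraicity of the $E_j$ entirely: away from the finitely many zeros of the discriminant of $P(s^m,\cdot)$, the real roots are distinct continuous functions, so a continuous $g$ taking values among them on a connected interval must coincide with a single branch. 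Either fix closes the gap. The final parenthetical about ``propagating the extension to all of $[0,\eta)$'' is unnecessary: $f$ is already continuous on $(0,\eta)$, and you have shown $\lim_{t\to 0^+}f(t)=c_{j_0,0}$, which is all that is needed.
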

\begin{proof}
For the first part, see \cite[Proposition 3.18]{ARAGbook}. For the second part, see \cite[page 10]{coste-RAS-notes}.
\end{proof}
We note that the theorem above is not true for arbitrary functions. For example the function $x\mapsto \sin(1/x)$ is bounded and continuous on any interval $(0,\eta)$ but cannot be extended by continuity at $0$. We finally record the following result which will also be needed for our proof. It simply says that any linear map, restricted to a semialgebraic set always admits a semialgebraic inverse.
\begin{theorem}
\label{thm:sachoice}
Let $S \subset \RR^N$ be a semialgebraic set and let $\pi:\RR^N \rightarrow \RR^n$ be a linear map. Then there exists a semialgebraic function $F: \pi(S) \rightarrow S$ that satisfies $\pi(F(x)) = x$ for all $x \in \pi(S)$.
\end{theorem}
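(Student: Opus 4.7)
First, I would reduce the theorem to the case of a coordinate projection. By composing $\pi$ with a suitable invertible linear change of coordinates on $\RR^N$ (which preserves semialgebraicity of images and preimages), one may assume $\pi$ is the projection onto the first $n$ coordinates. Factoring this projection as the composition of $N - n$ projections that each drop a single coordinate reduces the problem to the case $N = n+1$: if $G_k : \rho_k(S_k) \to S_k$ is a semialgebraic section of the one-variable projection $\rho_k : \RR^k \to \RR^{k-1}$ for each $k = n+1, \ldots, N$ (with $S_N = S$ and $S_{k-1} = \rho_k(S_k)$, which is semialgebraic by Tarski--Seidenberg), then the composition $G_{n+1} \circ G_{n+2} \circ \cdots \circ G_N$ is the desired section of $\pi$.

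The core of the argument is then to construct, for any semialgebraic $S \subseteq \RR^n \times \RR$, a semialgebraic function $\phi : \pi(S) \to \RR$ with $(x, \phi(x)) \in S$ for every $x \in \pi(S)$. For each $x$, the fiber $S_x = \{t \in \RR : (x,t) \in S\}$ is a nonempty semialgebraic subset of $\RR$, hence a finite disjoint union of isolated points and open intervals. To carry this out uniformly in $x$, I would invoke the cylindrical algebraic decomposition (CAD) theorem: it produces a finite semialgebraic partition $\RR^n = C_1 \sqcup \cdots \sqcup C_k$ (compatible with $\pi(S)$, so that each $C_i$ is either contained in or disjoint from $\pi(S)$) together with continuous semialgebraic functions $\xi_{i,1} < \cdots < \xi_{i,m_i}$ on each $C_i$ (possibly with the formal values $\pm \infty$ at the ends) such that over $C_i$ the set $S$ is a \emph{fixed} union of graphs $\{t = \xi_{i,j}(x)\}$ and open bands $\{\xi_{i,j}(x) < t < \xi_{i,j+1}(x)\}$. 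On each cell $C_i \subseteq \pi(S)$, I would pick the lowest nonempty component of $S$ and select a point from it: set $\phi(x) = \xi_{i,j}(x)$ if it is a graph, $\phi(x) = (\xi_{i,j}(x) + \xi_{i,j+1}(x))/2$ if it is a bounded band, $\phi(x) = \xi_{i,j+1}(x) - 1$ (resp.\ $\xi_{i,j}(x) + 1$) if the band is unbounded below (resp.\ above), and $\phi(x) = 0$ if $S_x = \RR$. Each piece is semialgebraic, and the finitely many pieces glue to a semialgebraic $\phi$ on $\pi(S) = \bigsqcup_{i : C_i \subseteq \pi(S)} C_i$; then $F(x) = (x, \phi(x))$ is the required section.

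The only real obstacle is the uniformity issue: fiberwise one sees trivially that $S_x$ is a finite union of points and intervals, but without further input neither the combinatorial type nor the interval endpoints need depend semialgebraically on $x$. CAD is precisely the tool that delivers a finite partition of the base over which the combinatorial type is constant and the endpoints are described by continuous semialgebraic functions; once invoked, the selection is a routine case-split and gluing.
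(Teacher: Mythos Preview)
Your proof plan is correct and follows the standard textbook approach (reduction to single-coordinate projections, then semialgebraic selection via cylindrical algebraic decomposition). The paper, however, does not give its own proof of this statement: it simply cites \cite[Lemma 1.5]{son2016genericity}. So there is nothing to compare your argument against within the paper itself; your sketch is essentially the proof one finds in standard references on real algebraic geometry (e.g., Coste's notes or Bochnak--Coste--Roy), and would be an acceptable replacement for the citation.
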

\begin{proof}
See \cite[Lemma 1.5]{son2016genericity}.
\end{proof}
%The theorem above can be proved via the cylindrical algebraic decomposition of semialgebraic sets.  for details.

\section{Semidefinite programming lifts}
\label{sec:sdplifts}

We are now ready to start the proof of Theorem \ref{thm:mainhermitian} (and thus of Theorem \ref{thm:main} too). The first thing we need is a necessary condition for the existence of a semidefinite representation for a given convex set $C$. The condition we state in Theorem \ref{thm:lifts} below is very similar to \cite[Theorem 1]{gouveia2011lifts}\footnote{The given condition is also sufficient, but we will only need necessity here}.
Recall that, for $A \subset \NN^n \times \NN^n$ we denote by $\m_A(z)$ the monomial map:
\[ \m_A(z) = \left[ z^u \bar{z}^v \right]_{(u,v) \in A}. \]
We also denote by $\cl S$ the (topological) closure of a set $S$. 
\begin{theorem}
\label{thm:lifts}
Let $A \subset \NN^n \times \NN^n$. Assume that
\[
\C_A = \cl \conv \left\{ \m_A(z) : z \in \CC^n \right\}
\]
has a semidefinite representation of size $k$. Then there exists $2k^2+1$ semialgebraic functions $f_j:\CC^n \rightarrow \RR$ ($j=1,\ldots,2k^2+1$) such that any nonnegative Hermitian polynomial $p$ supported on $A \cup \{(\mathbf{0},\mathbf{0})\}$ is a sum-of-squares from $V = \linspan_{\RR}(f_1,\ldots,f_{2k^2+1})$, i.e., $p = \sum_{j} h_j^2$ for some $h_j \in V$. Furthermore, the magnitude of the coefficients expressing the $h_j$ in terms of the basis $(f_1,\ldots,f_{2k^2+1})$ are all bounded by $\phi(\|p\|)$ where $\|p\|$ is the largest magnitude of the coefficients of $p$, and $\phi$ is some polynomial that only depends on the semidefinite representation of $\C_A$.
\end{theorem}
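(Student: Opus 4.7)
The plan is to combine the Gouveia--Parrilo--Thomas-style factorization of semidefinite lifts (which produces a psd dual witness for every nonnegative polynomial) with the semialgebraic choice theorem (Theorem~\ref{thm:sachoice}) to extract a fixed finite family of semialgebraic functions in whose $\RR$-linear span every admissible $p$ is a sum of squares.

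First I would convert the representation $\C_A = \pi(S)$ into a semialgebraic $k \times k$ matrix-valued function on $\CC^n$. Writing $S = \{(w,y) : M(w,y) := M_0 + \sum_i w_i M_i + \sum_j y_j N_j \psd 0\}$ with the $M_i, N_j$ Hermitian of size $k$, the inclusion $\m_A(z) \in \C_A$ for each $z \in \CC^n$ together with Theorem~\ref{thm:sachoice} applied to $\pi$ yields a semialgebraic section $y(\cdot)$ for which $M(z) := M_0 + \sum_i \m_A(z)_i M_i + \sum_j y(z)_j N_j \psd 0$ for all $z$. The principal square root on the Hermitian psd cone is semialgebraic (its graph is cut out by $F^2 = M$, $F = F^*$, and the semialgebraic condition $F \psd 0$), and composing it with $M(\cdot)$ produces a semialgebraic $F : \CC^n \to \CC^{k \times k}$ with $F(z) F(z)^* = M(z)$. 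I would then take
\[
f_{ab}^{\mathrm{re}}(z) := \Re\, F(z)_{ab}, \qquad f_{ab}^{\mathrm{im}}(z) := \Im\, F(z)_{ab}, \qquad 1 \le a,b \le k,
\]
together with $f_0 \equiv 1$, as the $2k^2+1$ basis functions spanning the space $V$.

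On the dual side, given a nonnegative polynomial $p(z) = p_{00} + \sum_{(u,v)\in A} p_{uv} z^u \bar{z}^v$, the associated affine functional $\ell_p$ is nonnegative on $\C_A = \pi(S)$, so the SDP minimizing $\ell_p \circ \pi$ over $S$ has nonnegative value. Its conic dual provides a Hermitian psd matrix $G_p$ of size $k$ with
\[
\Tr(G_p M_0) = p_{00}, \qquad \Tr(G_p M_{uv}) = p_{uv} \text{ for } (u,v) \in A, \qquad \Tr(G_p N_j) = 0 \text{ for all } j,
\]
and hence $p(z) = \Tr(G_p M(z))$ identically on $\CC^n$. Since the set of admissible $G$'s is a (nonempty) semialgebraic subset of the psd cone depending affinely on the coefficients of $p$, Theorem~\ref{thm:sachoice} again selects a semialgebraic map $p \mapsto G_p$, and the standard polynomial growth bound for semialgebraic functions gives $\|G_p\| \leq \phi(\|p\|)$ for some polynomial $\phi$. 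Factoring $G_p = H_p H_p^*$ (also semialgebraically) yields
\[
p(z) = \Tr\bigl(H_p^* F(z) F(z)^* H_p\bigr) = \|H_p^* F(z)\|_F^2 = \sum_{i,j=1}^{k} \bigl(\Re\,[H_p^* F(z)]_{ij}\bigr)^2 + \bigl(\Im\,[H_p^* F(z)]_{ij}\bigr)^2,
\]
and each summand lies in $V$, with coefficients (the real and imaginary parts of entries of $H_p$) of magnitude at most $\|H_p\| \leq \phi(\|p\|)^{1/2}$.

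The main subtlety I expect concerns the existence and norm control of the dual certificate $G_p$: SDP strong duality can fail without a Slater point, so rather than invoking textbook duality directly I would apply semialgebraic choice to the feasibility set $\{(p, G) : G \psd 0 \text{ and the trace identities hold}\}$, then extract the coefficient bound from the polynomial growth property of semialgebraic functions instead of from a priori duality estimates. The constant function $f_0$ is adjoined as a safeguard for degenerate $z$ at which $F$ vanishes while $p(z) > 0$; in generic cases the other $2k^2$ functions already suffice.
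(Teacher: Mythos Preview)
Your overall architecture is exactly the paper's: obtain a semialgebraic factorization $M(z)=F(z)F(z)^{\dagger}$ via Theorem~\ref{thm:sachoice}, take the $2k^2$ real/imaginary entries of $F$ together with the constant $1$, and for each nonnegative $p$ produce a psd dual witness $G_p$ so that $p(z)=\Tr(G_pM(z))$ becomes a sum of squares in those functions. The difference lies in how the dual certificate and its norm bound are obtained, and there your proposal has a gap.

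Semialgebraic choice (Theorem~\ref{thm:sachoice}) only yields a \emph{section} of a projection; it does not prove that the fiber over a given $p$ is nonempty. So invoking it on $\{(p,G):G\psd 0,\ \text{trace identities}\}$ does not sidestep the duality issue---you still have to know that some $G$ exists for every admissible $p$, which is precisely the Farkas/strong-duality step you are trying to avoid. Moreover your trace identities are slightly too strong: what duality actually delivers is $\Tr(G_pM_0)+b=p_{00}$ for some $b\geq 0$, not equality, and this slack $b$ is exactly why the constant function $f_0\equiv 1$ is needed (it contributes the term $(\sqrt b)^2$), not merely as a ``safeguard'' at degenerate points. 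The paper handles both issues in one stroke: first reduce to the case where the spectrahedron $S$ has nonempty interior (always possible, see e.g.\ Ramana), which guarantees a point $\tilde w$ with $M(\tilde w)\succ 0$. Slater's condition then gives both the existence of $(B,b)$ via conic Farkas and, by evaluating the identity $\langle\pi^*(\tilde p),\tilde w\rangle+c=\langle B,M(\tilde w)\rangle+b$ at that interior point, an explicit linear bound $\|B\|\leq \lambda_{\min}(M(\tilde w))^{-1}\cdot O(\|p\|)$. This replaces your appeal to ``polynomial growth of semialgebraic functions,'' which is not automatic for an arbitrary semialgebraic selection $p\mapsto G_p$ and would require further argument.
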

The main difference between the statement above and the one in \cite{gouveia2011lifts} (see also \cite[Theorem 5, Chapter 2]{fawziphdthesis}) is that here the functions $f_1,\ldots,f_{2k^2+1}$ are \emph{semialgebraic}. (We say that a function $f:\CC^n\rightarrow \RR$ is semialgebraic if the function $F:\RR^n \times \RR^n \rightarrow \RR$ defined by $F(a,b) = f(a+ib)$ is semialgebraic.) This observation will be crucial to us. We note that a statement similar to the theorem above appears as Theorem 3.4 in \cite{scheidererSDR}. Instead of working with semialgebraic functions, Scheiderer works with polynomial functions on an algebraic variety $X$.
\begin{proof}
Assume that $\C_A = \pi(S)$ where $S$ is a spectrahedron defined using a linear matrix inequality of size $k\times k$:
\[
S = \left\{w \in \RR^N : M(w) := M_0 + M_1 w_1 + \dots + M_N w_N \psd 0 \right\}.
\]
We can assume without loss generality that $S$ has nonempty interior in $\RR^N$. This in turn implies, using standard results about spectrahedra, that there exists $\tilde{w} \in S$ such that $M(\tilde{w})$ is positive definite (possibly after changing $M$), see e.g., \cite[Section 2.4]{ramana1995some}.

For any $z \in \CC^n$ there exists $w(z) \in S$ such that $\pi(w(z)) = \m_A(z)$. Since $M(w(z)) \psd 0$ we can find $F(z) \in \CC^{k\times k}$ such that $M(w(z)) = F(z) F(z)^{\dagger}$. Furthermore, by Theorem \ref{thm:sachoice}, the function $z \in \CC^n \mapsto F(z) \in \CC^{k\times k}$ can be taken to be semialgebraic.

Let $p(z)$ be a Hermitian polynomial supported on $A \cup \{(\mathbf{0},\mathbf{0})\}$, i.e., $p(z) = \langle \pp, \m_A(z) \rangle + c$ for some $c \in \RR$, where $\pp$ denotes the coefficients of the polynomial $p(z)$ in the monomial basis. Since $p \geq 0$ we get $\langle \pp , \m_A(z) \rangle + c \geq 0$ for all $z \in \CC^n$. This implies that $\langle \pp , \sigma \rangle + c \geq 0$ for all $\sigma \in \C_A$. We can lift this linear inequality to an inequality on the spectrahedron $S$, i.e., we have $\langle \pp, \pi(w) \rangle + c \geq 0$ for all $w \in S$, in other words
\[
\langle \pi^*(\pp) , w \rangle + c \geq 0 \quad \forall w \in \RR^N \text{ s.t. } M(w) \psd 0.
\]
By Farkas' lemma/duality for SDPs, this means that there exists $B \psd 0$ and $b \geq 0$ such that 
\begin{equation}
\label{eq:farkasconsequence}
\langle \pi^*(\pp), w \rangle + c = \langle B , M(w) \rangle + b \qquad \forall w \in \RR^N.
\end{equation}
Plugging $w = w(z)$ we get
\[
\langle \pp , \pi(w(z)) \rangle + c = \langle B , M(w(z)) \rangle + b = \langle B , F(z) F(z)^{\dagger} \rangle + b.
\]
Since $\pi(w(z)) = \m_A(z)$ and $\langle \pp , \m_A(z) \rangle = p(z)$ we get finally that $p(z) + c = \langle B , F(z) F(z)^{\dagger}\rangle + b$ for all $z \in \CC^n$. Factorizing $B = DD^{\dagger}$ we get 
\[
p(z) + c = \Tr\left[DD^{\dagger} F(z) F(z)^{\dagger}\right]+b = \sum_{ij=1}^{k} |D^{\dagger} F(z)|^2_{ij}+b = \sum_{ij=1}^{k} \Re[ (D^{\dagger} F(z))_{ij} ]^2 + \Im[ (D^{\dagger} F(z))_{ij} ]^2+b.
\]
If we define the $2k^2+1$ semialgebraic functions to be the constant function and the $z \mapsto \Re[F_{ij}(z)]$ and $z\mapsto \Im[F_{ij}(z)]$, we get the desired claim.

For the last statement of the theorem, we show that the coefficients of $B$ (and thus of $D$) are bounded by a polynomial in the coefficients of $p$. To get this, we can simply plug the value $w=\tilde{w}$ that makes $M(w)$ positive definite in \eqref{eq:farkasconsequence}. If we denote by $\lambda > 0$ the smallest eigenvalue of $M(\tilde{w})$ we get $\langle \pi^*(\pp), \tilde{w} \rangle + c = \langle B , M(\tilde{w}) \rangle + b \geq \lambda \Tr[B] + b \geq \lambda \|B\| + b$, thus $\max(\|B\|,b) \leq (\langle \pi^*(\pp), \tilde{w} \rangle + c)/\min(\lambda,1) \leq O(\max\{\|p\|, |c| \})$.
\end{proof}

\section{Proof of Theorem \ref{thm:mainhermitian}}
\label{sec:proofmainhermitian}

We are now ready to prove our main theorem, Theorem \ref{thm:mainhermitian}. We first recall a piece of notation that we will use throughout the proof: for any function $f:\CC^n \rightarrow \RR$ we associate the function of real variables $F:\RR^n \times \RR^n \rightarrow \RR$, denoted by a capital letter, defined by $F(a,b) = f(a+ib)$. We will also sometimes think of a vector $z \in \CC^n$ as $z \in \RR^{2n}$ and write for instance $F(z)$.

Assume that $\C_A$ has an SDP representation. Then, from Theorem \ref{thm:lifts} there exist semialgebraic functions $f_1,\ldots,f_r:\CC^n \rightarrow \RR$ such that the following is true:
\begin{equation}
\label{eq:propsos}
\tag{$\ast$}
\begin{array}{c}
\text{Any nonnegative Hermitian polynomial supported on $A \cup \{(\mathbf{0},\mathbf{0})\}$}\\
\text{is a sum-of-squares of functions from $\linspan_{\RR}(f_1,\ldots,f_r)$.}
\end{array}
\end{equation}

We will now prove that the functions $f_i$ can be taken to be smooth at the origin. (By this we mean that the associated functions $F_i : \RR^n \times \RR^n \rightarrow \RR$ are smooth at $(0,0)$.) This will follow from our assumption that $A$ is downward closed. Since the $F_i$ are semialgebraic we know from Theorem \ref{thm:sasmooth} that each $F_i$ is smooth almost everywhere. Thus we can find a common point $z_0=(a_0,b_0) \in \RR^n\times \RR^n$ such that all functions $F_i$ are smooth at $z_0$. Now let $\tilde{f_i}(z) = f_i(z-z_0)$ for all $i \in \{1,\ldots,r\}$. We claim that these semialgebraic functions still satisfy the property \eqref{eq:propsos}. Indeed if $q$ is a nonnegative Hermitian polynomial supported on $A \cup \{(\mathbf{0},\mathbf{0})\}$ then $q(z+z_0)$ is nonnegative and is also supported on $A$, since $A$ is downward closed. It follows that $q(z+z_0)$ is a sum-of-squares from $\linspan_{\RR}(f_1,\ldots,f_r)$. But this implies that $q(z)$ is a sum-of-squares from $\linspan_{\RR}(\tilde{f}_1,\ldots,\tilde{f}_r)$.

In the rest of the proof we will thus assume that the $F_1,\ldots,F_r$ are smooth at the origin. If $p$ is homogeneous we are almost done by the following simple observation: if $P$ is a real homogeneous polynomial and $P = \sum_{j} H_j^2$ for some functions $H_j:\RR^{m} \rightarrow \RR$ that are smooth at the origin, then $P$ is a sum-of-squares of \emph{polynomials}. This can be proved by a simple Taylor expansion; for example by applying the following proposition to the identity $t^{2k} P(z) = P(tz) = \sum_{j} H_j(tz)^2$ and observing that $\left.\frac{d^k}{dt^k} H_j(tz)\right|_{t=0}$ is a degree $k$ polynomial in $z \in \RR^{2n}$ (it is the $k$'th term in the Taylor expansion of $H_j$ at $z$).
\begin{proposition}
\label{prop:taylorexpansion}
Assume that $g_j:[0,\eta) \rightarrow \RR$ are smooth functions\footnote{Smoothness at $0$ is smoothness on the right} and that there exists $a \in \RR$ such that $at^{2k} = \sum_{j} g_j(t)^2$ for all $t \in [0,\eta)$. Then $a = \sum_{j} \left(\frac{g_j^{(k)}(0)}{k!}\right)^2$.
\end{proposition}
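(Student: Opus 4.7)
The plan is to compare Taylor coefficients on both sides of the identity $at^{2k} = \sum_j g_j(t)^2$ near $t=0$. Since each $g_j$ is smooth on $[0,\eta)$, Taylor's theorem with remainder gives
\[
g_j(t) = P_j(t) + R_j(t), \qquad P_j(t) = \sum_{i=0}^{2k} c_{j,i} t^i, \qquad c_{j,i} := \frac{g_j^{(i)}(0)}{i!},
\]
with $R_j(t) = o(t^{2k})$ as $t \to 0^+$. Since $P_j$ is bounded near $0$, we get $\sum_j g_j(t)^2 = \sum_j P_j(t)^2 + o(t^{2k})$. Combined with the hypothesis, the polynomial
\[
Q(t) := \sum_j P_j(t)^2 - at^{2k}
\]
satisfies $Q(t) = o(t^{2k})$ as $t \to 0^+$, which forces every coefficient of $Q$ of degree $\leq 2k$ to vanish. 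In other words, the coefficient of $t^n$ in $\sum_j P_j(t)^2$ is $0$ for $n < 2k$ and equals $a$ for $n = 2k$.

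Now I would extract these coefficients explicitly. The coefficient of $t^{n}$ in $\sum_j P_j^2$ is $\sum_{j}\sum_{i+i'=n} c_{j,i}\, c_{j,i'}$. I would then run an induction on $m = 0, 1, \ldots, k-1$ to show $c_{j,m} = 0$ for every $j$: given $c_{j,i} = 0$ for all $j$ and $i < m$, the only surviving term in the coefficient of $t^{2m}$ is the diagonal $i = i' = m$, so the vanishing of this coefficient yields $\sum_j c_{j,m}^2 = 0$, and hence $c_{j,m} = 0$. Once all $c_{j,i}$ with $i < k$ are known to vanish, the same reasoning applied to $n = 2k$ gives $\sum_j c_{j,k}^2 = a$, which is exactly the claimed identity.

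There is essentially no serious obstacle; the one point requiring care is that the $g_j$ are only smooth (not analytic), so one cannot freely manipulate power series. The Taylor-with-remainder decomposition and the $o(t^{2k})$ bound are precisely what make the remainder terms invisible at the relevant order, allowing the purely polynomial identification of coefficients to go through unchanged.
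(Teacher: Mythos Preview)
Your proof is correct and follows essentially the same approach as the paper's: both Taylor expand the $g_j$ at $0$, then equate coefficients inductively to force $g_j^{(i)}(0)=0$ for $i<k$ before reading off the coefficient of $t^{2k}$. Your version is in fact a bit more explicit, expanding to order $2k$ and spelling out the induction, whereas the paper compresses the same argument into a single sentence.
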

\begin{proof}
If we Taylor expand the $g_j$ at $0$ we get $at^{2k} = \sum_{j} (g_j(0) + t g_j'(0) + \dots + t^{k} g_j^{(k)}(0) / k! + o(t^k))^2$. By equating powers of $t$ we get that $g_j(0) = \dots = g_j^{(k-1)}(0) = 0$ and that $a = \sum_{j} (g_j^{(k)}(0) / k!)^2$ as desired.
\end{proof}
%Applying the previous proposition to the identity  gives us that $P(z)$ is a sum-of-squares of polynomials, since . Since the polynomial $P$ we started with is not a sum of squares, by assumption, we have reached a contradiction.

The case where $p$ is not necessarily homogeneous requires an additional argument. The following argument is inspired from \cite[Proposition 4.18]{scheidererSDR}. Let $2d = \deg p$, and for any $t \in \RR$ consider the Hermitian polynomial $p_{t}(z) = t^{2d} p(z/t)$. This Hermitian polynomial is nonnegative and is also supported on $A$. Thus we know from property \eqref{eq:propsos} that there exist real coefficients $a_j(t) \in \RR^{r}$ s.t. 
\begin{equation}
 \label{eq:hajf}
 P_{t}(z) = \sum_{j} \left ( a_j(t)^{\T} F(z) \right)^2 \qquad \forall z \in \RR^{2n}
\end{equation}
where we let $F(z) = (F_1(z),\ldots,F_{r}(z))$. The functions $a_j(t)$ are defined by a semialgebraic relation and so can be taken to be semialgebraic. As such the $a_j$ must be continuous on some $(0,\eta)$ for $\eta > 0$. From the last part of the statement of Theorem \ref{thm:lifts} we also know that the $a_j$ must be bounded on $(0,\eta)$. Thus, by Theorem \ref{thm:puiseux} we know that the $a_j$ can be extended by continuity to $[0,\eta)$ and that for large enough $m$, $a_j(t^m)$ is smooth on $[0,\eta')$ for some $0 < \eta' < \eta$. From \eqref{eq:hajf} we get:
\[
P_{t^m}(t^{m} z) = \sum_{j} (a_j(t^m)^{\T} F(t^{m} z))^2.
\]
But note that $P_{t^m}(t^{m} z) = t^{2dm} P(z)$. Thus
\[
t^{2dm} P(z) = \sum_{j} (a_j(t^m)^{\T} F(t^m z))^2.
\]
If we let $g_j(t) = a_j(t^m)^{\T} F(t^m z)$ we know that the $g_j$ are smooth on $[0,\eta')$, since the $F$ are smooth at the origin. We can apply the observation of Proposition \ref{prop:taylorexpansion} to get that
\[
P(z) = \sum_{j} \left(\frac{g_j^{(dm)}(0)}{(dm)!}\right)^2.
\]
But, from the definition of $g_j$, $g_j^{(dm)}(0)$ is a polynomial (of degree $d$) in $z$. This contradicts the assumption that $p(z)$ is not a sum of squares of polynomials.

\section{Proof of Theorem \ref{thm:main}}
\label{sec:proofmain}

\newcommand{\p}{p}

\textbf{The case $\Sep(3,3)$:} We prove that $\Sepcone(3,3)$ has no semidefinite representation. Define the Choi polynomial \cite{choi1975positive} by
\[
\begin{aligned}
\p(x,y) &= |x_1|^2 |y_1|^2 + |x_2|^2 |y_2|^2 + |x_3|^2 |y_3|^2 \\
& \quad - 2 ( \Re[x_1 \bar{x_2} y_1 \bar{y_2}] + \Re[x_2 \bar{x_3} y_2 \bar{y_3}] + \Re[x_1 \bar{x_3} y_1 \bar{y_3}] )\\
& \quad + 2 ( |x_1|^2 |y_2|^2 + |x_2|^2 |y_3|^2 + |x_3|^2 |y_1|^2).
\end{aligned}
\]
It was shown in \cite{choi1975positive} (see also \cite[Appendix B]{choi1980some}) that $\p(x,y) \geq 0$ for all $(x,y) \in \CC^3 \times \CC^3$, and yet $\p(x,y)$ is not a sum of squares. In fact the \emph{real} polynomial $\p(x,y)$ when $(x,y) \in \RR^3 \times \RR^3$ is not a sum of squares. It follows, by a simple homogenization argument, that the Hermitian polynomial $\hat{p}(x_1,x_2,y_1,y_2) = p(x_1,x_2,1,y_1,y_2,1)$ is not a sum of squares. Note that the support of $\hat{p}$ satisfies
\[
\supp \hat{p} \subset \hat{A} = \left\{ (\alpha,\beta,\gamma,\delta) \in (\NN^{2} \times \NN^{2}) \times (\NN^{2} \times \NN^{2}) : |\alpha|\leq 1, |\beta|\leq 1, |\gamma|\leq 1, |\delta|\leq 1\right\}.
\]
Since $\hat{A}$ is downward closed it follows from Theorem \ref{thm:mainhermitian} that
\begin{equation}
\label{eq:inhomsep}
\begin{aligned}
\C_{\hat{A}} &= \cl \conv \left\{ \m_{\hat{A}}(x,y) : (x,y) \in \CC^{2} \times \CC^{2} \right\}\\
&=\cl\conv\left\{ \left[x^{\alpha} y^{\beta} \bar{x}^{\gamma} \bar{y}^{\delta}\right]_{|\alpha|\leq 1, |\beta|\leq 1, |\gamma|\leq 1, |\delta|\leq 1} : (x,y) \in \CC^{2} \times \CC^{2} \right\}
\end{aligned}
\end{equation}
does not have a semidefinite representation. To conclude that $\Sepcone(3,3)$ has no semidefinite representation, it remains to note that $\C_{\hat{A}}$ is a hyperplane section of $\Sepcone(3,3)$. Indeed, first recall that $\Sepcone(3,3)$ can be written as
\[
\Sepcone(3,3) = \cl\conv\left\{ \m_{A}(x,y) : (x,y) \in \CC^3 \times \CC^3 \right\}
\]
where
\[
A = \left\{ (\alpha,\beta,\gamma,\delta) \in (\NN^{3} \times \NN^{3}) \times (\NN^{3} \times \NN^{3}) : |\alpha|= 1, |\beta|= 1, |\gamma|= 1, |\delta|= 1\right\}.
\]
It is easy to see that there is a one-to-one correspondence between $\hat{A}$ and $A$. In terms of the monomial map $\m$ this simply means that $\m_{A}$ is the homogenization of $\m_{\hat{A}}$. For example under an appropriate ordering of the monomials we have $\m_{A}(x_1,x_2,x_3,y_1,y_2,y_3) = |x_3|^2 |y_3|^2 \m_{\hat{A}}\left(\frac{x}{x_{3}}, \frac{y}{y_3}\right)$ where we let $x=(x_1,x_2)$ and $y=(y_1,y_2)$. It thus follows that $\Sepcone(3,3)$ can be written as
\[
\Sepcone(3,3) = \cl\conv \left\{ |x_3|^2 |y_3|^2 \m_{\hat{A}}\left(\frac{x}{x_{3}}, \frac{y}{y_3}\right) : (x,y) \in \CC^3 \times \CC^3 \right\}.
\]
It can then be readily verified that $\C_{\hat{A}}$ is a hyperplane section of $\Sepcone(3,3)$ where the appropriate coordinate (corresponding to the monomial $|x_3|^2 |y_3|^2$) is set to 1.

\bigskip

\textbf{The case $\Sep(4,2)$:} Following the same approach as above, we need to exhibit a Hermitian polynomial $\hat{p}(x_1,x_2,x_3,y)$ supported on
\begin{equation}
\label{eq:Ahat31}
\hat{A} = \left\{ (\alpha,\beta,\gamma,\delta) \in (\NN^3 \times \NN) \times (\NN^3 \times \NN) : |\alpha|\leq 1, |\beta|\leq 1, |\gamma|\leq 1, |\delta|\leq 1\right\}
\end{equation}
that is nonnegative but not a sum-of-squares. Since $\Sep(4,2) \neq \PPT(4,2)$ we know that there exists a Hermitian \emph{homogeneous} polynomial $\p(x,y)$ on $(x,y) \in \CC^4 \times \CC^2$ of the form
\[
\p(x,y) = \sum_{ijkl} p_{ijkl} x_i \bar{x_j} y_k \bar{y_l} \quad (x \in \CC^4, y \in \CC^2)
\]
that is nonnegative but not a sum of squares. Note that such a $p$ satisfies $p(\lambda x, \mu y) = |\lambda|^2 |\mu|^2 p(x,y)$ for any $(\lambda,\mu) \in \CC^2$. To get the desired polynomial $\hat{p}$ it would suffice to dehomogenize the polynomial $\p$ by setting one of the $x$ variables to 1,  and one of the $y$ variables to 1. It turns out, however, that one cannot guarantee in general that this dehomogenized polynomial is not a sum of squares. (We give an explicit example in Appendix \ref{sec:hakye}.) The reason we could dehomogenize the Choi polynomial in the $(3,3)$ case was that the Choi polynomial is not a sum of squares  when the variables are \emph{real}. One cannot expect this to be true for our polynomial in (4,2) variables as it is known that any biquadratic \emph{real} polynomial in $(n,2)$ variables \emph{is} a sum of squares \cite{calderon}. Nevertheless we show in Appendix \ref{sec:hakye} that by choosing an appropriate polynomial $\p$, and an appropriate dehomogenization we can get a polynomial $\hat{p}(x,y)$ supported on $\hat{A}$ of Equation \eqref{eq:Ahat31} that is not a sum-of-squares. This implies that $\C_{\hat{A}}$ is not semidefinite representable. Using a similar argument as for the $(3,3)$ case we get that $\Sepcone(4,2)$ is not semidefinite representable.

\appendix

\section{The case $\Sep(4,2)$}
\label{sec:hakye}

In this section we exhibit a nonnegative Hermitian polynomial $\hat{p}(x_1,x_2,x_3,y)$ supported on $\hat{A}$, defined in Equation \eqref{eq:Ahat31}, that is not a sum of squares.

Consider the following map $\Phi:\M_2 \rightarrow \M_4$ studied in \cite{hakye}:
\[
\Phi\left(\begin{bmatrix} x & y\\ z & w\end{bmatrix}\right)
=
\left[
\begin{array}{cccc}
 3 w+4 x-2 y-2 z & 2 z-2 x & 0 & 0 \\
 2 y-2 x & 2 x & z & 0 \\
 0 & y & 2 w & -w-2 z \\
 0 & 0 & -w-2 y & 2 w+4 x \\
\end{array}
\right].
\]
It is shown in \cite{hakye} that the map $\Phi$ is positive but not decomposable. 
%In fact it is an exposed extremal ray of the cone of positive maps $\M_2 \rightarrow \M_4$. 
We associate to $\Phi$ the Hermitian polynomial
\[
\W(x,y) = x^{\dagger} \Phi(yy^{\dagger}) x \qquad x \in \CC^4, y \in \CC^2.
\]
Then we know from Proposition \ref{prop:decompsos} that $\W$ is positive but not a sum of squares. The purpose of this section is to prove the following:
\begin{proposition}
\label{prop:Wdehom}
The (nonhomogeneous) Hermitian polynomial $\W(1,x_2,x_3,x_4,1,y_2)$ is not a sum of squares.
\end{proposition}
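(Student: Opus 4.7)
My plan is to prove Proposition~\ref{prop:Wdehom} by contradiction: assume $\hat{\W}(x_2,x_3,x_4,y_2) := \W(1,x_2,x_3,x_4,1,y_2)$ admits a decomposition $\hat{\W} = \sum_j q_j^2$ with Hermitian polynomials $q_j$, and lift this to a sum-of-squares identity involving the bihomogeneous polynomial $\W$ itself. The aim is to contradict the non-decomposability of the Ha-Kye map $\Phi$ (via Proposition~\ref{prop:decompsos}).

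First I would analyze the admissible monomials of the $q_j$: since $\hat{\W}$ has combined degree at most $2$ in $(x_2,x_3,x_4,\bar{x}_2,\bar{x}_3,\bar{x}_4)$ and at most $2$ in $(y_2,\bar{y}_2)$, each $q_j$ has combined degree at most $1$ in each of these blocks. I would then bi-homogenize by mapping each monomial $\hat{x}^{\alpha}\bar{\hat{x}}^{\beta}y_2^{k}\bar{y}_2^{l}$ of $q_j$ (with $|\alpha|+|\beta|\le 1$ and $k+l\le 1$) to $\hat{x}^{\alpha}\bar{\hat{x}}^{\beta}y_2^{k}\bar{y}_2^{l}\cdot x_1^{1-|\alpha|}\bar{x}_1^{1-|\beta|}y_1^{1-k}\bar{y}_1^{1-l}$. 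This produces a Hermitian polynomial $\tilde{Q}_j(x,y)$ of bi-degree $(1,1,1,1)$ in $(x,\bar{x},y,\bar{y})$ satisfying $\tilde{Q}_j(1,x_2,x_3,x_4,1,y_2)=q_j$. The key algebraic observation is that both $\sum_j \tilde{Q}_j^2$ and $|x_1|^2|y_1|^2\W$ are Hermitian polynomials of bi-degree $(2,2,2,2)$ that agree on the slice $(x_1,y_1)=(1,1)$. A direct check shows that the dehomogenization map on bi-degree-$(2,2,2,2)$ Hermitian polynomials is injective---each monomial at that bi-degree corresponds to a unique dehomogenized monomial in $(\hat{x},\bar{\hat{x}},y_2,\bar{y}_2)$---so we conclude the polynomial identity
\[
\sum_j \tilde{Q}_j(x,y)^2 \;=\; |x_1|^2|y_1|^2\,\W(x,y).
\]

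The hard step will be extracting a sum-of-squares decomposition of $\W$ itself from this identity. Specializing to $x_1=0$ (resp.\ $y_1=0$) gives $\sum_j\tilde{Q}_j(0,\hat{x},y_1,\hat{y})^2=0$ (resp.\ the analogue for $y_1$), and since sums of Hermitian squares vanish only summand-wise, each monomial of $\tilde{Q}_j$ must contain $x_1$ or $\bar{x}_1$ and $y_1$ or $\bar{y}_1$. Using the biquadratic structure of the $\tilde{Q}_j$ I would then decompose $\tilde{Q}_j = g_j+\bar{g}_j+h_j+\bar{h}_j$ with $g_j = x^{\T}A^{(j)}y$ and $h_j = x^{\T}B^{(j)}\bar{y}$ (in the manner of the proof of Proposition~\ref{prop:decompsos}), and compare coefficients of all bi-degree-$(2,2,2,2)$ monomials that are absent from $|x_1|^2|y_1|^2\W$ (in particular, those forced to vanish by the many zero entries in the Ha-Kye matrix $\Phi$). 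The resulting ``sum of non-negatives equals zero'' identities should iteratively eliminate entries of the $A^{(j)},B^{(j)}$ until the surviving structure assembles into a genuine sum-of-squares decomposition of $\W$, contradicting Ha-Kye's non-decomposability. I expect the main obstacle to be carrying out this coefficient-by-coefficient analysis concretely, exploiting the explicit sparsity pattern of $\Phi$ (and showing that the specific choice of Ha-Kye's map---rather than a generic non-decomposable map---makes the argument go through, as the paper already warns that other natural choices of $\p$ do not).
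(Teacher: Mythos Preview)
Your lift to the identity $\sum_j \tilde Q_j^2 = |x_1|^2|y_1|^2\,\W$ (steps 1--4) is correct, but the plan breaks down at step~5. Each $\tilde Q_j$ is bi-homogeneous of degree $(1,1,1,1)$: every monomial is of the form $x_i\bar x_j y_k\bar y_l$. This is \emph{not} the shape $g+\bar g+h+\bar h$ with $g=x^{\T}A y$, $h=x^{\T}B\bar y$ from Proposition~\ref{prop:decompsos}; there the $q_t$ are sums of bidegree-$(1,0,1,0)$, $(0,1,0,1)$, $(1,0,0,1)$, $(0,1,1,0)$ pieces, derived precisely from the absence of $x_i\bar x_j$-type terms. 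Your $\tilde Q_j$ contain such terms by construction (e.g.\ a constant in $q_j$ lifts to $x_1\bar x_1 y_1\bar y_1$), so that decomposition is unavailable and the subsequent coefficient chase has no starting point.

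More fundamentally, the strategy of deducing ``$\W$ is SOS'' from ``$|x_1|^2|y_1|^2\W$ is SOS'' cannot succeed without input specific to the slice $x_1=y_1=1$. The paper itself exhibits a dehomogenization of the very same $\W$ that \emph{is} a sum of squares (setting $x_3=-6$ instead of $x_1=1$); for that slice your steps 1--4 would produce the same identity with $|x_3|^2$ in place of $|x_1|^2$, yet $\W$ is not SOS. So no purely structural argument of the kind you sketch can close the gap---something must distinguish the $x_1$-slice from the $x_3$-slice. The paper supplies that extra input in a completely different way: it never tries to reconstruct an SOS decomposition of $\W$. Instead it restricts the monomial support of the hypothetical $q_j$ using two missing diagonal terms of $\hat\W$ (no $|x_3|^2$, no $|\alpha|^2|x_2|^2$), and then evaluates each $q_j$ along the explicit one-parameter family of zeros $x(\alpha)$ of $\W$ computed by Ha--Kye. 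Reading off three coefficients of the resulting identity in $\alpha$ forces the $x_4$-coefficient of every $q_j$ to vanish, contradicting the presence of $4|x_4|^2$ in $\hat\W$.
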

The proof of this proposition involves some computations, and we use the specific properties of $\W$ (its zeros) which have been studied in \cite{hakye}. We note that there are other dehomogenizations of $\W$ that \emph{are} sums of squares. For example, we show later that $\W(x_1,x_2,-6,x_4,1,y_2)$ \emph{is} a sum of squares.
\begin{proof}[Proof of Proposition \ref{prop:Wdehom}]
To lighten the notation we let $y_2 = \alpha$. In \cite{hakye}, the zeros of the polynomial $\W$ were identified. Namely it was shown that
\begin{equation}
\label{eq:zeroW}
\W ( x_1(\alpha) , x_2(\alpha) , x_3(\alpha) , x_4(\alpha) , 1 , \alpha ) = 0 \quad \forall \alpha \in \CC
\end{equation}
where
\[
x(\alpha) := \Bigl(2\alpha(1-\alpha) , \;\; \alpha \left[ 4 - 2 (\alpha+\bar{\alpha}) + 3 |\alpha|^2 \right] , \;\; -4 - 2|\alpha|^2 , \;\; -\bar{\alpha}(2+\alpha)\Bigr) \in \CC^4.
\]
Let $p(x_2,x_3,x_4,\alpha) = \W(1,x_2,x_3,x_4,1,\alpha)$. The explicit formula of $p$ is
\[
\begin{aligned}
p(x_2,x_3,x_4,\alpha) &= 2 |\alpha|^2 |x_3|^2 + 2 |\alpha|^2 |x_4|^2 - x_3 \bar{x_4} |\alpha|^2  - \bar{x_3}  x_4 |\alpha|^2 \\
  &\quad+ \bar{x_2} x_3 \alpha + x_2 \bar{x_3}\bar{\alpha} - 2 \bar{x_3} x_4 \alpha - 2 x_3 \bar{x_4}\bar{\alpha} \\
  &\quad+ 3 |\alpha|^2  + 2 |x_2|^2 + 4 |x_4|^2 + 2 x_2\alpha  + 2 \bar{x_2}\bar{\alpha}\\
  &\quad - 2 x_2 - 2\bar{x_2} - 2 \alpha - 2\bar{\alpha} + 4.
\end{aligned}
\]
Assume that $p = \sum_{i} g_i^2$ where $g_i$ are Hermitian polynomials in $x_2,x_3,x_4,\alpha$. Since $p$ has no terms $|x_3|^2$ we see that the $g_i$ cannot contain monomials $x_3$ or $\bar{x_3}$. Similarly $p$ does not have a term $|\alpha|^2 |x_2|^2$ and so $p$ cannot contain monomials $\alpha x_2$, $\bar{\alpha} x_2$, $\alpha \bar{x_2}$ or $\bar{\alpha} \bar{x_2}$. It follows that each $g_i$ must be a linear combination of the monomials
\[
1, \;\; \alpha, \;\; x_2, \;\; x_4, \;\; \alpha x_3, \;\; \alpha x_4, \;\; \bar{\alpha} x_3, \;\; \bar{\alpha} x_4
\]
and their conjugates. In other words, each $g_i$ is of the form:
\begin{equation}
\label{eq:g_i_expansion}
\begin{aligned}
g_i(x_2,x_3,x_4,\alpha) &= a_i + b_i \alpha + c_i x_2 + d_i x_4 + e_i \alpha x_3 + f_i \alpha x_4 + g_i \bar{\alpha} x_3 + h_i \bar{\alpha} x_4\\
&\quad + \bar{b_i} \bar{\alpha} + \bar{c_i} \bar{x_2} + \bar{d_i} \bar{x_4} + \bar{e_i} \bar{\alpha} \bar{x_3} + \bar{f_i} \bar{\alpha} \bar{x_4} + \bar{g_i} \alpha \bar{x_3} + \bar{h_i} \alpha \bar{x_4}
\end{aligned}
\end{equation}
where $a_i,b_i,\ldots,h_i \in \CC$. We will now use the information about the zeros of $\W$ (and thus of $p$) to deduce relations about these coefficients and reach a contradiction.

Since $\W$ is bihomogeneous in the first set of variables, we have (dividing by $|x_1(\alpha)|^2$) from \eqref{eq:zeroW} that
\[
p \left( \;\; \frac{x_2(\alpha)}{x_1(\alpha)}, \;\; \frac{x_3(\alpha)}{x_1(\alpha)}, \;\; \frac{x_4(\alpha)}{x_1(\alpha)} , \;\; \alpha \right) = 0, \quad \forall \alpha \in \CC \setminus \{0,1\}.
\]
Since $p = \sum_{i} g_i^2$ we get that for all $i$,
\begin{equation}
\label{eq:eqzerog_i}
g_i \left( \;\; \frac{x_2(\alpha)}{x_1(\alpha)}, \;\; \frac{x_3(\alpha)}{x_1(\alpha)}, \;\; \frac{x_4(\alpha)}{x_1(\alpha)} , \;\; \alpha \right) = 0, \quad \forall \alpha \in \CC \setminus \{0,1\}.
\end{equation}
We can clear denominators in \eqref{eq:eqzerog_i} by multiplying the expression by $|x_1(\alpha)|^2$.  As a result we get that
\begin{equation}
\label{eq:eqzerog_i2}
|x_1(\alpha)|^2 g_i \left( \;\; \frac{x_2(\alpha)}{x_1(\alpha)}, \;\; \frac{x_3(\alpha)}{x_1(\alpha)}, \;\; \frac{x_4(\alpha)}{x_1(\alpha)} , \;\; \alpha \right) = 0, \quad \forall \alpha \in \CC.
\end{equation}
The left-hand side of \eqref{eq:eqzerog_i2} is a Hermitian polynomial in $\alpha$ that is identically zero. Hence all its coefficients must be equal to 0. This allows us to derive conditions on the coefficients of $g_i$ in \eqref{eq:g_i_expansion}. More precisely:
\begin{itemize}
\item The coefficient of $\alpha^4$ is $4\bar{h_i}$. Setting $4\bar{h_i}$ to zero yields $h_i = 0$.
\item The coefficient of $\alpha^4 \bar{\alpha}$ is $4 \bar{g_i} + 2 \bar{h_i}$. Setting to zero we get $g_i = 0$.
\item The coefficient of $\alpha^2$ is $-4 \bar{d_i} -8\bar{g_i}$. Setting to zero we get $d_i = 0$.
\end{itemize}
This gives a contradiction: indeed the coefficient of $|x_4|^2$ in $p$ is $4 > 0$ and yet $\sum_{i} |d_i|^2 = 0$.
\end{proof}

We conclude this appendix by proving, as promised, that there is another dehomogenization of $\W$ that is a sum of squares. Let:
\[
q(x_1,x_2,x_4,\alpha) = \W(x_1,x_2,-6,x_4,1,\alpha).
\]

Let
\[
A = 
\left(
\begin{array}{cccccc}
 36 & 0 & -3 & 0 & 0 & 0 \\
 0 & 2 & -1 & 0 & -1 & 0 \\
 -3 & -1 & 1 & 0 & 1 & 0 \\
 0 & 0 & 0 & 2 & 0 & 0 \\
 0 & -1 & 1 & 0 & \frac{3}{2} & 0 \\
 0 & 0 & 0 & 0 & 0 & 1 \\
\end{array}
\right),
\qquad
B = \left(
\begin{array}{cccccc}
 0 & 0 & 0 & 6 & 0 & 3 \\
 0 & 0 & 0 & 0 & 0 & -1 \\
 0 & 0 & 0 & 0 & 0 & 0 \\
 6 & 0 & 0 & 0 & 1 & 0 \\
 0 & 0 & 0 & 1 & 0 & 0 \\
 3 & -1 & 0 & 0 & 0 & 0 \\
\end{array}
\right).
\]

One can verify that $A-B \psd 0$ and $A+B \psd 0$, i.e., that $\begin{sm} A & B\\ B & A\end{sm} \psd 0$. Let $m(x_1,x_2,x_4,\alpha) = (\alpha,x_1,x_2,x_4,\bar{\alpha}x_1,\bar{\alpha}x_4)$. Then one can check that we have the following sum of squares decomposition of $q$:
\[
q(x_1,x_2,x_4,\alpha) = \begin{bmatrix} m(x,\alpha)\\ \bar{m}(x,\alpha) \end{bmatrix}^{\dagger} \begin{bmatrix} A & B\\ B & A \end{bmatrix} \begin{bmatrix} m(x,\alpha)\\ \bar{m}(x,\alpha) \end{bmatrix}.
\]

\paragraph{Acknowledgements} I would like to thank Omar Fawzi for his encouragements and for helpful comments on the paper. I would also like to thank Claus Scheiderer for useful discussions and exchanges related to the material of this paper, and James Saunderson for comments that helped improved the exposition.

\bibliography{../../bib/nonnegative_rank}
\bibliographystyle{alpha}

\end{document}